%% file: main.tex
\documentclass[11pt,a4paper]{article}

\usepackage{preprint}
\usepackage{mathtools}
\usepackage{xurl}

\title{Quantum Query Algorithms for the Constructive Diagonal Ramsey Theorem}
\author{Cheng Xin\\[0.4em]\small Department of Computer Science\\\small California State University, Fresno\\\small\texttt{cxin@mail.fresnostate.edu}}
\date{}
\AtBeginDocument{\hypersetup{
  pdftitle={Quantum Query Algorithms for the Constructive Diagonal Ramsey Theorem},
  pdfauthor={Cheng Xin},
  pdfsubject={Preprint: quantum query algorithms for constructive Ramsey search},
  pdfkeywords={diagonal Ramsey theorem, clique or independent set, quantum query complexity,
  adjacency oracle, implicit sets, total search}}}

\begin{document}

\maketitle

\begin{abstract}
The constructive diagonal Ramsey problem asks, given adjacency-oracle access
to an $N$-vertex graph, for a clique or independent set of the order
guaranteed by Ramsey's theorem.  We give a bounded-error quantum algorithm
that, for every $K\ge2$ and $N\ge4^{K-1}$, finds and verifies a homogeneous
$K$-set using
\[
  O\!\left(2^K K\log\frac K\eta\right)
\]
edge queries with failure probability at most $\eta$.  At the Ramsey scale
$N=2^n$, this yields a homogeneous set of order $\lfloor n/2\rfloor+1$
using $O(\sqrt N\log N\log(\log N/\eta))$ queries, improving on the $O(N)$
queries of the explicit classical recursion and giving, to our knowledge,
the first sublinear worst-case algorithm for the Ramsey relation.
We also derive an $\Omega(N^{1/12})$ quantum lower bound by a reduction
from collision finding.

The algorithm runs the constructive recursion over implicit candidate sets.
Each set is represented by a short conjunction of adjacency constraints and
sampled using capped unknown-solution quantum search, and a scale-aware
concentration schedule balances estimation accuracy against the increasing
cost of sampling deeper sets.  We complement the upper bound with an
$\Omega(N^{1-1/\sqrt2})$ randomized lower bound, transported from the
random-Painter analysis of online Ramsey numbers, which holds on the uniform
distribution $G(N,1/2)$.  On that distribution a greedy quantum search uses
only $\widetilde O(N^{1/4})$ queries, giving a provable polynomial quantum
speedup for Ramsey search on random graphs.  We also give an
estimation-free size-biased recursion and extend it to every fixed number
of edge colours.
\end{abstract}

\input{sections/01_introduction}
\input{sections/02_model_and_search}
\input{sections/03_algorithm}
\input{sections/04_lower_and_random}
\input{sections/05_related_work}
\input{sections/06_discussion}

\clearpage
\bibliographystyle{plainurl}
\bibliography{references,foundational_references}

\clearpage
\appendix
\input{appendices/A_capped_search}
\input{appendices/B_upper_proofs}
\input{appendices/C_lower_and_random}
\input{appendices/D_extensions_and_reproducibility}

\end{document}

%% file: sections/01_introduction.tex
\section{Introduction}\label{sec:introduction}

Ramsey's theorem guarantees that every graph on $N\ge4^{K-1}$ vertices
contains a clique or an independent set of order $K$.  Its classical
constructive proof \cite{ErdosSzekeres1935} repeatedly chooses a pivot and
restricts the candidate set to one of the pivot's two colour
neighbourhoods; truncating each retained neighbourhood to half its current
size, it finds a homogeneous $K$-set with $O(N)$ adjacency queries.  The
corresponding total search problem, \textsc{Ramsey}, has been studied in
proof complexity and in TFNP
\cite{Krajicek2001,Krajicek2005,KomargodskiNaorYogev2019,
PasarkarPapadimitriouYannakakis2023,JainLiRobereXun2024}: the input graph on
$N=2^n$ vertices is given by a circuit or an oracle, and the target order is
$K=n/2$.  The query complexity of this relation in the deterministic,
randomized, and quantum models is posed as an open question
in \cite[Section~III]{JainLiRobereXun2024}.  This paper answers the
quantum question up to a polynomial gap and clarifies the lower-bound
record on both the quantum and the classical side.

The central observation is that the candidate sets of the constructive proof
need not be materialized.  After $i$ rounds, membership in the current set
is determined by the $i$ adjacency constraints imposed by the previous
pivots.  These short predicates allow quantum search to sample from the
same recursion at substantially smaller query cost.  The threshold $4^{K-1}$
is the natural scale for this recursive construction.

\subsection{Our results}

We work in the promised valid-graph adjacency-oracle model.  One coherent
query returns the colour of a requested unordered vertex pair, and the output
is a set of vertices inducing one colour.  Our main result is the following.

\begin{table}[t]
\centering
\setlength{\tabcolsep}{4pt}
\begin{tabular}{llll}
\hline
model & input & lower bound & upper bound\\
\hline
randomized & worst case
 & $\Omega(N^{1-1/\sqrt2})$ \ [\cref{prop:classical-lower}]
 & $O(N)$ \ \cite{ErdosSzekeres1935}\\
randomized & $G(N,1/2)$
 & $\Omega(N^{1-1/\sqrt2})$ \ [\cref{prop:classical-lower}]
 & $\widetilde O(N^{1/2})$ \ [\cref{rem:classical-greedy}]\\
quantum & worst case
 & $\Omega(N^{1/12})$ \ [\cref{prop:quantum-lower}]
 & $\widetilde O(N^{1/2})$ \ [\cref{cor:succinct}]\\
quantum & $G(N,1/2)$
 & open
 & $\widetilde O(N^{1/4})$ \ [\cref{cor:separation}]\\
\hline
\end{tabular}
\caption{Bounded-error query complexity of finding a homogeneous set of
order $\lfloor n/2\rfloor+1$ in a graph on $N=2^n$ vertices, in the worst
case and on the uniform distribution $G(N,1/2)$; here
$1-1/\sqrt2\approx0.2929$.  The lower bounds are the random-Painter
estimate of \cite{ConlonFoxGrinshpunHe2019} and the collision reduction of
\cite{JainLiRobereXun2024} at multiplicity two (\cref{prop:quantum-lower}
is stated for order $\lfloor n/2\rfloor$ and hence valid here).}
\label{tab:bounds}
\end{table}

\begin{theorem}[Implicit-majority quantum Ramsey search]
\label{thm:main}
Let $K\ge2$, $N\ge4^{K-1}$, and $0<\eta<1/2$.  Given coherent edge-oracle
access to a simple undirected graph on $N$ vertices, there is a quantum
algorithm which, with probability at least $1-\eta$, outputs a $K$-clique or
a $K$-vertex independent set using
\[
 O\!\left(2^K K\log\frac K\eta\right)
\]
edge queries in the worst case.  The algorithm verifies every returned
witness and may output $\bot$ on the exceptional event.
\end{theorem}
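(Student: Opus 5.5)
We will run the Erdős--Szekeres recursion without materializing candidate sets. The state after round $i$ is a sequence of pivots $p_1,\dots,p_i$ with colours $c_1,\dots,c_i\in\{0,1\}$. The current candidate set is
\[
S_i=\{v\notin\{p_1,\dots,p_i\}: \mathrm{col}(p_j,v)=c_j \text{ for all } j\le i\}.
\]
Membership of a vertex in $S_i$ is a conjunction of $i$ edge queries, so a single call of the membership oracle costs $O(i)$ queries. We maintain the invariant $|S_i|\ge 4^{K-1-i}/c$ for a suitable constant slack $c$, or more precisely a lower bound $|S_i|\gtrsim N/2^{i}\cdot(\text{const})$ with controlled multiplicative loss. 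The approach is to absorb the loss into the starting margin $N\ge 4^{K-1}$: the recursion only needs the red and blue counts to stay balanced, namely pivots red-count plus blue-count at least $2K-1$ and $\max(\#\text{red},\#\text{blue})\ge K$.

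At round $i$ the algorithm does the following.
\begin{enumerate}
\item It samples a uniform pivot $p\in S_i$ by capped unknown-solution quantum search (the appendix result on capped search), at cost $O(i\sqrt{N/|S_i|})$ queries.
\item It estimates the fraction $\alpha$ of $S_i$ lying in the colour-$1$ neighbourhood of $p$. It does this by drawing $m_i$ further samples from $S_i$ and querying one edge each.
\item It sets $c_{i+1}$ to the majority colour, which keeps $|S_{i+1}|\ge(1/2-\epsilon_i)(|S_i|-1)$.
\end{enumerate}
The key accounting is that $|S_i|\approx N\prod_j(1/2-\epsilon_j)$, so the search cost at depth $i$ is roughly $i\,2^{i/2}$ times the constant $\prod_j(1-2\epsilon_j)^{-1/2}$. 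With $2K-2$ rounds and $N\ge 4^{K-1}=2^{2K-2}$, the total becomes
\[
\sum_{i\le 2K-2} m_i\, i\, 2^{i/2}\sqrt{N/N}\cdot O(1)\approx 2^K K\cdot(\text{samples}).
\]
This matches the target once the per-round sample counts are $O(\log(K/\eta))$ near the deep end.

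The schedule (the ``scale-aware concentration schedule'') chooses $\epsilon_i$ and $m_i$ so that two things hold. First, $\prod(1-2\epsilon_i)$ stays bounded below by a constant, so the invariant survives up to the final scale. Second, $\sum_i m_i\, i\, 2^{i/2}=O(2^K K\log(K/\eta))$. A natural choice is $\epsilon_i\asymp 2^{-(2K-2-i)/4}$, meaning larger errors are tolerated deep in the recursion where sampling is expensive, with $m_i\asymp\epsilon_i^{-2}\log(K/\eta)$. Then the terms form a geometric series dominated by the last rounds. We combine Chernoff bounds with a union bound over $2K$ rounds, each with failure $\eta/(4K)$. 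Capped search also fails with probability at most $\eta/(4K)$ per call, after repetitions whose $\log$ factor is absorbed in the same budget.

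The main obstacle is the final rounds, where $|S_i|$ is $O(1)$ and the multiplicative slack $\prod(1-2\epsilon_i)$ could push the set size below what is needed. We handle them by switching to an explicit phase once $|S_i|\le C\log(K/\eta)$. There the whole set is found by repeated search, and classical Erdős--Szekeres is run on it, at cost $O(K\cdot 2^K\log(K/\eta))$. We also need the count of majority choices to yield $K$ pivots of one colour plus one final vertex. This follows from the standard argument whenever the total depth reaches $2K-1$, which the invariant guarantees. Finally, the output is verified with $O(K^2)\le O(2^K K)$ queries, and the algorithm outputs $\bot$ if verification fails.
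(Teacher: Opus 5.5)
Your architecture matches the paper's: implicit candidate sets given by conjunctions of pivot constraints, capped uniform sampling, approximate majorities under a depth-dependent error schedule, a product invariant, and final verification. However, two of your concrete choices break the proof.

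\textbf{The schedule.} Your choice $\epsilon_i\asymp2^{-(2K-2-i)/4}$ sits exactly at the critical exponent. Here $\epsilon_i^{-2}\asymp2^{(2K-2-i)/2}$ cancels the search cost $2^{i/2}$, so $m_i\,i\,2^{i/2}\asymp i\,2^{K-1}\log(K/\eta)$ at every level. The sum is therefore $\Theta(K^2 2^K\log(K/\eta))$, not a geometric series dominated by the last rounds, and you lose a factor $K$. You need $\epsilon_i\asymp2^{-\gamma h}$ with $0<\gamma<1/4$ strictly, where $h$ is the distance to the last level. The paper takes $\gamma=1/6$: the error doubles every six levels and is normalized to $\sum_i\varepsilon_i=1/16$. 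Level $i$ then costs $O\bigl((r-h)2^{r/2}2^{-h/6}\log(r/\eta)\bigr)$, and the total is $O(r2^{r/2}\log(r/\eta))$.

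\textbf{The round count.} Your count leaves no room for the constant-factor loss. With $2K-2$ coloured rounds from $N=4^{K-1}$, the ideal final size is $N2^{-(2K-2)}=1$. Your bookkeeping with $2K-1$ pivots, $K$ of one colour, plus a final vertex asks for even more. So a product bound $\prod_j(1-2\epsilon_j)\ge c$ does not guarantee a nonempty last set. In fact, one step that keeps the class of size $2^{m-1}-1$ instead of $2^{m-1}$ already costs a round, and no estimator with $\epsilon_i\gg2^{-m}$ can rule that out.

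The explicit phase does not repair this. Exact majority halving from a set of size $s$ supports only $\lfloor\log_2 s\rfloor$ further rounds. A set already a constant factor below $2^h$ therefore cannot supply $h$ more rounds by that procedure.

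The missing observation is that $r=2K-3$ colours suffice: $\lceil r/2\rceil=K-1$ same-coloured pivots plus the final vertex $w$ already form a homogeneous $K$-set. Then $M2^{-r}=2$, and the invariant $s_i>MA_i-1$ with $A_i\ge\frac78\,2^{-i}$ gives $s_r>3/4$, so $S_r\neq\emptyset$ and no explicit phase is needed. This is also why the product must stay above $1/2$ rather than merely above some constant, which the paper secures by $\sum_i\varepsilon_i=1/16$.
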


At the succinct Ramsey scale, the theorem gives the following form.

\begin{corollary}\label{cor:succinct}
Let $n\ge2$, $N=2^n$, and $0<\eta<1/2$.  Given coherent edge-oracle access
to a simple graph on $N$ vertices, one can find a clique or independent set
of order $\lfloor n/2\rfloor+1$ using
\[
 O\!\left(
 \sqrt N\log N\log\frac{\log N}{\eta}
 \right)
\]
edge queries in the worst case.
\end{corollary}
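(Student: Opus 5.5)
We derive the corollary directly from \cref{thm:main} by choosing the order $K=\lfloor n/2\rfloor+1$ and checking the hypotheses. Since $n\ge2$, we have $K\ge2$. We also have $K-1=\lfloor n/2\rfloor\le n/2$, so $4^{K-1}\le 4^{n/2}=2^n=N$. Hence $N\ge4^{K-1}$ holds. The theorem requires the graph to have exactly $N$ vertices, which is the case here, so no padding or restriction of the vertex set is needed. The failure parameter $\eta$ is passed through unchanged, since $0<\eta<1/2$ is assumed in both statements.

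Next we convert the query bound $O(2^K K\log(K/\eta))$ into the stated form.

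\begin{itemize}
\item \emph{The factor $2^K$.} We have $2^K=2\cdot2^{\lfloor n/2\rfloor}\le 2\cdot2^{n/2}=2\sqrt N$.
\item \emph{The factor $K$.} We have $K\le n/2+1\le n=\log_2 N$ for $n\ge2$, so $K=O(\log N)$.
\item \emph{The logarithmic factor.} Since $K\le\log_2 N$, we get $\log(K/\eta)\le\log(\log_2 N/\eta)$.
\end{itemize}

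One point needs care. The expression $\log\frac{\log N}{\eta}$ should be read with $\log N$ measured in a fixed base. Changing the base alters the argument only by a constant factor, and since $\eta<1/2$, the quantity $\log(\log N/\eta)$ is bounded below by a positive constant. So the constant factor inside the logarithm is absorbed into the $O(\cdot)$. For the smallest case $n=2$, where $\log_2 N=2$, the bound still holds because $\log(2/\eta)\ge\log 4>0$.

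Multiplying the three estimates gives $O(\sqrt N\log N\log(\log N/\eta))$. The witness-verification and $\bot$ semantics carry over unchanged from \cref{thm:main}.

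The only step needing attention is this bookkeeping of constants in the nested logarithm and in the small-$n$ cases. The Ramsey threshold $N\ge4^{K-1}$ is exactly tight when $n$ is even, so no slack is lost there.
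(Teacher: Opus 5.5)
Your proposal is correct and follows the paper's own proof: set $K=\lfloor n/2\rfloor+1$, check $4^{K-1}\le N$ so that \cref{thm:main} applies, and then bound $2^K\le2\sqrt N$, $K=O(\log N)$, and $\log(K/\eta)=O(\log(\log N/\eta))$. Your extra care with the logarithm base and the case $n=2$ is harmless bookkeeping that the paper leaves implicit.
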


\begin{proof}
Set $K=\lfloor n/2\rfloor+1$.  Then
$4^{K-1}=2^{2\lfloor n/2\rfloor}\le 2^n=N$, so \cref{thm:main} applies, and
$2^K\le2\sqrt N$, $K=O(\log N)$, and
$\log(K/\eta)=O(\log(\log N/\eta))$.
\end{proof}

To our knowledge, this is the first sublinear worst-case query algorithm
for the Ramsey relation in any model.  \Cref{tab:bounds} places it among
the known bounds together with the further results of this paper.

\paragraph{An estimation-free recursion.}
Following the colour of a uniformly sampled edge selects a colour class with
probability proportional to its size, and a one-dimensional survival
recurrence shows that this recursion completes with probability greater
than $1/2$ from $4^{K-1}$ vertices.  This gives an
$O(2^KK^2\log K\log(1/\eta))$-query algorithm without any majority
estimation (\cref{prop:size-biased-algorithm}), which extends to every
fixed number of edge colours (\cref{cor:multicolour}).

\paragraph{A randomized lower bound.}
Every randomized algorithm needs $\Omega(M^{1-1/\sqrt2})$ queries to find a
homogeneous $K$-set in a graph on $M=4^{K-1}$ vertices, even on the uniform
distribution $G(M,1/2)$ (\cref{prop:classical-lower}); this improves the
$M^{1/4}$ black-box bound of
\cite{ImpagliazzoNaor1988,KomargodskiNaorYogev2019}.  The proof transports
the random-Painter weight estimate of \cite{ConlonFoxGrinshpunHe2019} to
the query model, with a self-contained version in \cref{app:classical-lower}.

\paragraph{A quantum speedup on random graphs.}
On $G(N,1/2)$, a greedy quantum neighbourhood search finds a $K$-clique
with $\widetilde O(N^{1/4})$ queries (\cref{prop:random-graph}).  Together
with the distributional form of the lower bound, this gives a polynomial
separation between the quantum and randomized query complexities of the
Ramsey relation on its natural hard distribution (\cref{cor:separation}).
We are not aware of an earlier provable quantum speedup for this relation.
In the worst case, no separation is claimed: the randomized complexity is
only known to lie between $N^{1-1/\sqrt2}$ and $N$.

\paragraph{The quantum lower-bound record.}
The paragraph of \cite{JainLiRobereXun2024} that raises the query question
also reports an $N^{1-o(1)}$ quantum lower bound at the default parameters,
obtained by composing its collision reduction with the multicollision lower
bound of \cite{LiuZhandry2019}.  \Cref{thm:main} shows that no such bound
can hold at these parameters, and \cref{sec:jlrx} shows that the cited
ingredients compose to an exponent of at most $1/12$ (at most $1/24$ under
the parameter condition as printed), attained at multiplicity two.  The
resulting $\Omega(N^{1/12})$ bound (\cref{prop:quantum-lower}) is, to our
knowledge, the best known quantum lower bound; the main theorems of
\cite{JainLiRobereXun2024} do not depend on the remark.

\subsection{Techniques}

Both algorithms represent the candidate set after round $i$ by the
conjunction of the adjacency constraints generated by the first $i$ pivots,
so a membership test costs $O(i)$ edge queries, and sample it with capped
unknown-solution search (\cref{lem:capped-search}); conditional on success,
permutation symmetry makes each sample exactly uniform.  The scale-aware
recursion estimates a near-majority colour at every pivot.  Sampling a deep
set is exponentially more expensive than sampling an early one, so we
distribute a fixed error budget across the levels, doubling the allowed
error every six levels; a product invariant then keeps every candidate set
within a constant factor of its ideal size, and the query sum is dominated
by the last levels.

The lower bound and the random-graph speedup both rest on the predictable
size of common neighbourhoods in $G(N,1/2)$.  A martingale weight argument
in the style of \cite{ConlonFoxGrinshpunHe2019} shows that an adaptive
algorithm with $T$ queries exposes a monochromatic $K$-set with probability
at most $2\cdot2^{-\binom K2+c(c-1)}(2T)^{K-c}$ for every $c\le K/2$, and
optimizing $c$ gives the exponent $2-\sqrt2$; a union bound shows that
every $j$-set has common neighbourhood of density about $2^{-j}$, so greedy
quantum search costs $O(2^{j/2})$ predicate evaluations at depth $j$.

\paragraph{Organization.}
\Cref{sec:model} fixes the query model and the sampler.
\Cref{sec:algorithm} presents the two recursions, their analysis, and the
proof sketch of \cref{thm:main}; \cref{sec:lower} states the lower bounds
and the random-graph speedup.  \Cref{sec:prior-art} discusses related work
and the reported lower bound, and \cref{sec:scope} lists open questions.
All full proofs are in the appendices: the sampler and the encoding
corollary in \cref{app:capped-search}, the upper bounds in
\cref{app:upper-proofs}, the lower bounds and random-graph results in
\cref{app:classical-lower}, and the multicolour extension, numerical
illustrations, and AI disclosure in \cref{app:extensions}.

%% file: sections/02_model_and_search.tex
\section{Oracle model and capped quantum sampling}\label{sec:model}

Let $G\colon\binom{[N]}2\to\{0,1\}$ be a promised simple undirected graph,
extended symmetrically to ordered pairs with $G(u,u)=0$.  Its coherent edge
oracle acts as
$O_G\lvert u,v,b\rangle=\lvert u,v,b\oplus G(u,v)\rangle$.
Query complexity counts calls to $O_G$; reversible operations on vertex
labels are free, as usual in the query model.  The same asymptotic bound
applies to the standard locally verifiable TFNP relation with arbitrary
Boolean-circuit encodings.

\begin{corollary}[Locally verifiable arbitrary encoding]
\label{cor:unpromised}
The standard Ramsey TFNP relation that accepts either a local invalidity
certificate or a locally verified homogeneous set has the same
$O(2^K K\log(K/\eta))$ quantum-query upper bound as \cref{thm:main}.
\end{corollary}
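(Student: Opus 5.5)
The plan is to reduce the unpromised relation to \cref{thm:main} by simulating the promised edge oracle from the Boolean-circuit encoding. Fix the standard encoding: on query $(u,v)$ the circuit $C$ returns a bit $C(u,v)$. A local invalidity certificate is a short witness that $C$ does not define a simple undirected graph. Concretely, it is either a pair with $C(u,v)\ne C(v,u)$, or a loop with $C(u,u)=1$, or, if vertex labels range over a set larger than $[N]$, an out-of-range label. I will define a derived oracle $G'(u,v)$ for unordered $\{u,v\}$ with $u<v$ by $G'(u,v)=C(u,v)$, read from the canonically ordered pair, and $G'(u,u)=0$. Sorting a pair of labels is a reversible operation and therefore free. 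Each coherent call to $O_{G'}$ then costs one call to $C$ plus one uncomputation call. This adds only a factor of $2$, since uncomputation is needed to keep the query coherent and garbage-free. By construction $G'$ is always a simple undirected graph, so the promise in \cref{thm:main} holds for $G'$ no matter what $C$ is.

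Next I run the algorithm of \cref{thm:main} on $G'$. With probability at least $1-\eta$ it returns a set $S$ of $K$ vertices that is homogeneous in $G'$, and it has already verified this using $\binom K2$ queries to $G'$. I then check $S$ against the relation's own local verifier. For every pair $u<v$ in $S$, I query both $C(u,v)$ and $C(v,u)$, and for every $u\in S$ I query $C(u,u)$. This costs $O(K^2)$ classical queries, which is $O(2^KK)$ because $K^2\le 2^K K$. If every check is consistent, then $S$ is homogeneous under every reading the verifier could make, and I output it as a locally verified homogeneous set. If any check fails, the failing pair or loop is itself a local invalidity certificate, and I output that. In either case the output is accepted by the relation. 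The failure event is exactly the $\eta$-event of \cref{thm:main}, where the algorithm returns $\bot$. The total query count is $2\cdot O(2^KK\log(K/\eta))+O(K^2)=O(2^KK\log(K/\eta))$.

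The main obstacle is making sure the derived oracle is what \cref{thm:main} needs even though $C$ is adversarial. The proof of \cref{thm:main} treats the input as a fixed graph $G$. It uses that fixedness for the exact uniformity of samples, by permutation symmetry, and for the majority-estimation guarantees. I must confirm that these arguments depend only on $G'$ being a fixed simple graph, which holds because $G'$ is a deterministic function of $C$, and not on any structural property of the encoding. A second point is that the relation's notion of "local invalidity" may differ from symmetry and loop violations, for instance by including label-range checks. The canonicalization step must cover every clause of the verifier so that no accepted answer can be missed. I would handle this by listing the verifier's clauses and showing that each is forced consistent on $S$ by the final $O(K^2)$ checks.
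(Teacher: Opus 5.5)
Your proposal is correct and follows the paper's proof. You canonicalize $C$ into the simple graph $A(u,v)=C(\min\{u,v\},\max\{u,v\})$ with zero diagonal at $O(1)$ queries per coherent call, run \cref{thm:main} on $A$, and then spend $O(K^2)$ queries on the diagonal and both ordered entries of the returned set, outputting either the set or a self-loop/asymmetric pair as an invalidity certificate; one small wording point is that $(u,v)\mapsto(\min\{u,v\},\max\{u,v\})$ is not itself reversible, but computing it into ancillas and uncomputing it afterwards is reversible, and that is all you need.
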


The proof (\cref{app:capped-search}) canonicalizes the circuit into a
simple graph, runs \cref{thm:main}, and checks the $O(K^2)$ entries induced
by the output for a self-loop or an asymmetric pair.  Our algorithms use
the following consequence of unknown-solution quantum search
\cite{BBHT1998}, proved in \cref{app:capped-search}; the analysis needs
both bounded termination and exact conditional uniformity.

\begin{lemma}[Capped uniform marked-item search]\label{lem:capped-search}
Let $T\subseteq[M]$ have density at least $\lambda>0$, and suppose its
membership predicate costs $q$ edge queries.  There is a quantum block which
uses at most $O(q/\sqrt\lambda)$ edge queries, succeeds with probability at
least an absolute constant $p_0>0$, and, conditional on verified success,
returns an exactly uniform element of $T$.  The same predetermined cap makes
the block terminate if the density promise is false or $T$ is empty.
Moreover, $m$ independent uniform samples can be collected, except with
probability $\delta$, using a predetermined
$O(m+\log(1/\delta))$ blocks whenever the density promise holds.
\end{lemma}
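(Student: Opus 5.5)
We obtain the block from the BBHT algorithm for an unknown number of marked items, stopped at a fixed cap. Let $f\colon[M]\to\{0,1\}$ be the membership predicate of $T$. Each coherent evaluation of $f$ (compute, phase-flip, uncompute) costs $2q$ edge queries. Write $\theta$ for the unknown angle with $\sin^2\theta=|T|/M$.

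\textbf{The block.} Run the exponential-schedule BBHT procedure, but cap the total number of Grover iterations at $C/\sqrt\lambda$ for a suitable absolute constant $C$. Concretely, set $m_0=1$. In stage $t$, pick $j$ uniformly from $\{0,\dots,\lceil m_t\rceil-1\}$, apply $j$ Grover iterations to the uniform superposition over $[M]$, measure an index $x$, and evaluate $f(x)$ classically with $q$ more queries. If $f(x)=1$, stop and output $x$. Otherwise set $m_{t+1}=\min(\tfrac65 m_t,1/\sqrt\lambda)$. The block gives up (returns failure) once the cumulative iteration count reaches the cap.

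\textbf{Cost and success probability.} The cap bounds the cost by $O(q/\sqrt\lambda)$ deterministically, whatever $T$ is. This gives termination even when the density promise fails or $T=\emptyset$; in the empty case no $x$ ever passes verification, so the block reports failure. Under the promise, $\sin^2\theta\ge\lambda$. The BBHT lemma states that once $m\ge1/\sin(2\theta)$, a uniformly random $j<m$ yields a marked item with probability at least $1/4$. The geometric schedule reaches $m_t\ge\Theta(1/\sqrt\lambda)$ after a total iteration count of $O(1/\sqrt\lambda)$. Hence, for $C$ large enough, the cap admits at least one stage with success probability $\ge1/4$. This gives a constant $p_0>0$. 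If $\lambda$ is close to $1$, so that $1/\sin 2\theta$ is not defined usefully, plain classical sampling handles that regime directly.

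\textbf{Exact uniformity.} Fix any stage and $j$. The Grover iterate $G$ commutes with every permutation $\pi$ of $[M]$ that fixes $T$ setwise, because the initial state and both reflections are $\pi$-invariant. So the measured distribution $p_j(x)$ is invariant under such $\pi$, and is therefore constant on $T$. In fact the state lies in $\mathrm{span}\{\lvert T\rangle,\lvert T^c\rangle\}$ with uniform amplitudes on each part. Conditioning on verified success ($x\in T$) therefore yields the uniform distribution on $T$. The same holds after mixing over the random $j$, the stages, and the event of stopping at a given stage, since each conditional law is uniform on $T$. Verification is exact because it evaluates $f$ classically, so no non-members can leak into the output.

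\textbf{Collecting $m$ samples.} Successive blocks are independent, and each succeeds with probability at least $p_0$. We run $B=\lceil c(m+\log(1/\delta))\rceil$ blocks and keep the successes. The number of successes dominates a $\mathrm{Bin}(B,p_0)$ variable. A Chernoff bound gives $\Pr[\mathrm{Bin}(B,p_0)<m]\le\delta$ once $c$ is a large enough constant depending on $p_0$. Given which blocks succeeded, the successful outputs are independent and exactly uniform.

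\textbf{Main obstacle.} The delicate step is fixing the cap constant against the BBHT schedule so that success probability at least $p_0$ holds uniformly for every density in $[\lambda,1]$, including the regime of large densities. I expect to handle this by citing the BBHT stage lemma directly rather than rederiving it.
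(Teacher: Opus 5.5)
Your proposal is correct and follows the paper's proof essentially step for step. The shared ingredients are BBHT unknown-solution search truncated at a predetermined $O(1/\sqrt\lambda)$ cap, permutation symmetry of the Grover iterate plus exact classical re-verification for conditional uniformity, and a Chernoff lower tail over $O(m+\log(1/\delta))$ independent blocks. Two small repairs are needed, neither of which changes the approach.

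\textbf{The cost cap.} Your cap counts only Grover iterations. Once $m_t$ has saturated, repeated draws of $j=0$ add nothing to that count, yet each triggers a $q$-query verification. So, as literally written, the block has no deterministic $O(q/\sqrt\lambda)$ query bound, even when $T$ is empty. The fix is to charge each stage $j+1$ toward the cap or, as the paper does, to truncate the schedule itself after $O(1+\log(1/\lambda))$ stages.

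\textbf{The high-density regime.} This case is governed by the actual density $|T|/M$, not by $\lambda$, so your remark about $\lambda$ near $1$ points at the wrong quantity. The case is nonetheless already handled by your construction. When $|T|/M\ge1/2$, your first stage has $j=0$, so it is a classical sample that succeeds with probability $|T|/M\ge1/2$. When $|T|/M\le1/2$, one has $1/\sin2\theta\le1/\sqrt{2\lambda}<1/\sqrt\lambda$, so your saturation level suffices for the BBHT stage bound.
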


For comparison, the explicit classical recursion on $M=4^{K-1}=2^{r+1}$
vertices, $r=2K-3$, queries at level $i$ all edges from a pivot to the other
$s_i=M/2^i$ candidates and retains exactly $s_i/2$ vertices of a majority
colour, for $\sum_{i<r}(s_i-1)<2M$ queries in total.  The quantum algorithms
retain the entire selected colour class but access it only implicitly.

%% file: sections/03_algorithm.tex
\section{Implicit-set recursions and the main theorem}\label{sec:algorithm}

\subsection{An estimation-free size-biased recursion}

Set $M=4^{K-1}$ and $r=2K-3$, restrict the input to its first $M$ vertices,
let $S_0=[M]$, and fix any $v_0\in S_0$.  At step
$i=0,\ldots,r-1$, define $T_i=S_i\setminus\{v_i\}$.  Apply the repeated
capped search from \cref{lem:capped-search}, with the universal density lower
bound $1/M$, to sample an exactly uniform $x_i\in T_i$; abort if the fixed
cap is exhausted.  Query $c_i=G(v_i,x_i)$ and set
\[
 S_{i+1}=\{x\in T_i:G(v_i,x)=c_i\},
 \qquad v_{i+1}=x_i.
\]
After the final step, put $w=v_r$.  Among the $r$ labels $c_i$, choose a
colour that appears at least $K-1$ times, and return the corresponding
pivots together with $w$ after directly verifying the induced edges.  The
algorithm never materializes $S_i$: its membership predicate is the
conjunction of the edge constraints imposed by the earlier pivots, together
with the exclusions of those pivots, and costs $O(i+1)$ edge queries.
Repeating each capped block $O(\log(2r))$ times reduces its conditional
failure probability on a nonempty set to $O(1/r)$.

For integers $d,s\ge0$, let $P_d(s)$ be the infimum, over all graphs,
candidate sets of size $s$, and distinguished pivots in those sets, of the
probability that the ideal size-biased recursion completes $d$ additional
steps; set $P_0(s)=1$ for $s\ge1$, $P_d(0)=0$, and $P_d(1)=0$ for $d\ge1$.

\begin{lemma}[Size-biased survival]\label{lem:size-biased-survival}
For every $d\ge0$ and $s\ge1$,
$P_d(s)\ge(s-2^d+1)_+/s$, where $(y)_+=\max\{y,0\}$.
In particular, for $M=4^{K-1}$ and $r=2K-3$, the ideal recursion completes
with probability at least $1/2+1/M$.
\end{lemma}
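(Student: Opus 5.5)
The plan is to prove the bound by induction on $d$, using a one-step recurrence. The key is that the ideal recursion, started from a candidate set $S$ of size $s$ with pivot $v\in S$, moves to a child set in which the new pivot is again a member. So the process at the next step has exactly the same form, and the infimum $P_d$ can be fed back in.

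\emph{Base cases.} For $d=0$ we have $P_0(s)=1\ge(s-1+1)_+/s=1$ for every $s\ge1$. For $s=1$ and $d\ge1$, the bound reads $(2-2^d)_+/1=0$, which matches $P_d(1)=0$. The same computation shows that the convention $P_d(1)=0$ is compatible with the claimed formula whenever $d\ge1$.

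\emph{Inductive step.} Fix $d\ge0$, $s\ge2$, a graph, a set $S$ with $|S|=s$, and a pivot $v\in S$. Let $T=S\setminus\{v\}$, so $|T|=s-1$. Split $T=A\cup B$ according to the colour of the edge to $v$, with $a=|A|$ and $b=|B|$, so that $a+b=s-1$.
\begin{itemize}
\item The ideal step samples $x$ uniformly from $T$. With probability $a/(s-1)$ it lands in $A$, and the next candidate set is exactly $A$ with distinguished pivot $x\in A$.
\item Symmetrically, with probability $b/(s-1)$ the next set is $B$.
\end{itemize}
Conditioning on the first step and using that $P_d$ is an infimum over all configurations, the probability of completing $d+1$ steps is at least
\[
\frac{a}{s-1}P_d(a)+\frac{b}{s-1}P_d(b).
\]
A class of size $0$ enters with weight $0$, so the convention $P_d(0)=0$ causes no trouble. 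Applying the induction hypothesis to each class with nonzero weight, the factors $a$ and $b$ cancel the denominators $a$ and $b$ in the bound. This leaves
\[
\frac{(a-2^d+1)_++(b-2^d+1)_+}{s-1}.
\]

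\emph{Superadditivity.} Since $(y)_++(z)_+\ge(y+z)_+$, the last expression is at least
\[
\frac{(a+b-2^{d+1}+2)_+}{s-1}=\frac{(s-2^{d+1}+1)_+}{s-1}\ge\frac{(s-2^{d+1}+1)_+}{s}.
\]
This bound holds for every graph and pivot, so it also holds for the infimum, which gives $P_{d+1}(s)\ge(s-2^{d+1}+1)_+/s$ and completes the induction.

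\emph{The particular case.} Take $s=M=4^{K-1}=2^{2K-2}$ and $d=r=2K-3$, so that $2^d=M/2$. The bound becomes
\[
\frac{M-M/2+1}{M}=\frac12+\frac1M.
\]

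\emph{Main obstacle.} The only delicate point is bookkeeping: one must check that the recursion as defined really transitions to the \emph{entire} colour class containing $x_i$, with $x_i$ itself as the new distinguished pivot, and that $x_i$ is not removed from that class. This is what makes the child configuration an instance of the same problem, so that $P_d$ applies, and what lets the sizes add up to exactly $s-1$. The superadditivity step then absorbs any cases where a class is small or empty.
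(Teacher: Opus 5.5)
Your proof is correct and follows the paper's argument essentially verbatim. Both use induction on $d$. Both split the $s-1$ non-pivot vertices into colour classes with $a+b=s-1$, chosen with probabilities $a/(s-1)$ and $b/(s-1)$. Both apply the hypothesis to each class, collapse with $(x)_++(y)_+\ge(x+y)_+$, and weaken the denominator from $s-1$ to $s$. Your explicit checks (the $s=1$ and empty-class cases, and that the new pivot $x_i$ stays in $S_{i+1}$) spell out what the paper leaves to its remark that ``the definitions cover $s\le1$''.
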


The proof is an induction on $d$: the sample selects the two colour classes
of sizes $a+b=s-1$ with probabilities $a/(s-1)$ and $b/(s-1)$, and
$(x)_++(y)_+\ge(x+y)_+$ collapses the resulting sum to one positive part.
Exact conditional uniformity couples an implemented run to the ideal
recursion, so charging each search failure probability $1/(100r)$ leaves
completion probability at least $0.49$; counting $O(\sqrt M)$ tests per
block, $O(\log(2r))$ blocks per level, and $O(i+1)$ queries per test then
gives the following bound (\cref{app:upper-proofs}).

\begin{proposition}[Size-biased quantum recursion]
\label{prop:size-biased-algorithm}
Under the hypotheses of \cref{thm:main}, the estimation-free algorithm returns
a verified homogeneous $K$-set with probability at least $1-\eta$ using
$O(K^2 2^K\log K\log(1/\eta))$ edge queries in the worst case.
\end{proposition}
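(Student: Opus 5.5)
The plan has three parts: show that one run of the implemented recursion succeeds with constant probability, bound the cost of a run, and amplify by independent repetition with verification.

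\emph{Correctness of a completed run.} Suppose the ideal recursion completes all $r=2K-3$ steps. Then every later pivot $v_j$ with $j>i$, and also $w=v_r$, lies in $S_{i+1}$, so $G(v_i,v_j)=c_i$ and $G(v_i,w)=c_i$. Among the $r=2(K-1)-1$ labels, pigeonhole gives a colour appearing at least $K-1$ times. The corresponding pivots together with $w$ then form a homogeneous $K$-set, because every edge inside it goes from an earlier pivot to a later vertex and so carries the earlier pivot's label. Verification queries $\binom K2=O(K^2)$ edges and never accepts a wrong witness.

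\emph{Success probability of one run.} By \cref{lem:size-biased-survival} with $d=r$ and $s=M$, the ideal recursion completes with probability at least $1/2+1/M$. Coupling the implemented run to the ideal one uses exact conditional uniformity from \cref{lem:capped-search}: conditional on success, each $x_i$ is uniform in $T_i$, exactly as in the ideal process. The two processes therefore differ only on the event that some capped block fails on a nonempty $T_i$.

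\begin{itemize}
\item The density of $T_i$ in $[M]$ is at least $1/M$ whenever $T_i\ne\emptyset$, so the universal promise holds.
\item Repeating each block $O(\log r)$ times, a constant large enough, makes each level fail with probability at most $1/(100r)$.
\item A union bound over the $r$ levels costs at most $1/100$.
\end{itemize}

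A run therefore outputs a verified witness with probability at least $1/2-1/100\ge0.49$. If some $T_i$ is empty, the ideal recursion has already died there, and the predetermined cap of \cref{lem:capped-search} guarantees termination, so such runs are already counted as failures.

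\emph{Query count.} At level $i$ the membership predicate of $T_i$ checks $x\ne v_0,\dots,v_i$, which is free, and $G(v_j,x)=c_j$ for $j<i$, which costs $i$ queries. So $q=O(i+1)$. With $\lambda=1/M$, each block costs $O((i+1)\sqrt M)$ queries by \cref{lem:capped-search}, and each level uses $O(\log r)$ blocks. Summing over $i<r$ gives
\[
O\!\left(\sqrt M\,\log r\sum_{i<r}(i+1)\right)=O(r^2\sqrt M\log r)=O(K^22^K\log K),
\]
since $\sqrt M=2^{K-1}$. The extra $r$ queries for the labels $c_i$ and the $O(K^2)$ verification queries are negligible.

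\emph{Amplification.} Run the algorithm $O(\log(1/\eta))$ times independently and output the first verified witness, or $\bot$ if none appears. Failure requires every run to fail, which happens with probability at most $0.51^{O(\log(1/\eta))}\le\eta$. The total cost is $O(K^22^K\log K\log(1/\eta))$. Restricting the input to the first $M\le N$ vertices is free.

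The main obstacle is the coupling step. The error accounting needs the implemented run's distribution, conditional on no block failures, to equal the ideal recursion's distribution exactly. This rests on the lemma's claim that the sample is exactly uniform conditional on verified success, and I would state the coupling level by level by induction on $i$, conditioning on the history $(v_0,\dots,v_i,c_0,\dots,c_{i-1})$.
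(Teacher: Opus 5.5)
Your proposal is correct and follows essentially the same route as the paper: the survival bound $1/2+1/M$ from the size-biased lemma, a coupling via exact conditional uniformity with per-level search failure at most $1/(100r)$, the per-run cost $O(\sqrt M\log(2r)\sum_{i<r}(i+1))=O(K^22^K\log K)$, and $O(\log(1/\eta))$ independent verified repetitions. The only difference is cosmetic: the paper writes $O(\log(2r))$ repetitions per block, which avoids the degenerate $\log r=0$ when $K=2$.
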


\subsection{Scale-aware approximate majorities}

The sharper recursion estimates a near-majority colour at each pivot.  Fix
$M=4^{K-1}$ input vertices, $r=2K-3$, and $E=1/16$.  For $0\le i<r$, define
\[
 d_i=\left\lceil\frac{r-1-i}{6}\right\rceil,
 \qquad z_i=2^{-d_i},
 \qquad Z=\sum_{j=0}^{r-1}z_j,
\]
and allocate
\begin{equation}\label{eq:error-schedule}
 \varepsilon_i=E\frac{z_i}{Z},
 \qquad a_i=\frac12-\varepsilon_i,
\end{equation}
so that $\sum_i\varepsilon_i=E$.  The allowable error doubles once every six
levels, matching the increasing cost of sampling the deeper candidate sets.
Before round $i$, the recorded pivots $v_0,\ldots,v_{i-1}$ and colours
$c_0,\ldots,c_{i-1}$ define the exact implicit set
\begin{equation}\label{eq:implicit-set}
 S_i=\left\{
 u\in[M]\setminus\{v_0,\ldots,v_{i-1}\}:
 G(v_j,u)=c_j\ \text{for every }j<i
 \right\},
\end{equation}
whose reversible membership predicate uses $O(i)$ edge queries.

At round $i$, use \cref{lem:capped-search} to obtain a verified pivot
$v_i\in S_i$.  Then collect, with replacement,
\begin{equation}\label{eq:sample-count}
 m_i=\left\lceil
 \frac{1}{2\varepsilon_i^2}\log\frac{4r}{\eta}
 \right\rceil
\end{equation}
independent uniform samples from $S_i\setminus\{v_i\}$, query their edges to
$v_i$, and let $\widehat p_i$ be the observed fraction of colour~$1$.  Set
$c_i=1$ when $\widehat p_i\ge1/2$ and $c_i=0$ otherwise; appending this edge
constraint to \cref{eq:implicit-set} defines $S_{i+1}$.  After $r$ rounds,
one final capped search returns $w\in S_r$.  Select a colour that occurs at
least $K-1$ times among $c_0,\ldots,c_{r-1}$, take any $K-1$ pivots
carrying that colour, add $w$, query all $\binom K2$ induced edges, and
return the set only if this check confirms homogeneity.  Every search and
batch has a predetermined cap, so every execution terminates; a failed
search or batch returns $\bot$.

\subsection{Analysis}

\begin{lemma}[Conditional near-majorities]\label{lem:concentration}
Except with probability at most $\eta/2$, every completed sampling batch
reached after successful searches and an accurate prefix satisfies
$|\widehat p_i-p_i|\le\varepsilon_i$, where $p_i$ is the true colour-$1$
fraction from $v_i$ to $S_i\setminus\{v_i\}$.
\end{lemma}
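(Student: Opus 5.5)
The plan is a per-round Hoeffding bound combined with a union bound over the $r$ rounds. The one subtlety is conditioning. Fix a round $i<r$ and condition on the full history up to the moment the batch at round $i$ begins: the pivots $v_0,\ldots,v_{i-1}$, the colours $c_0,\ldots,c_{i-1}$, and the verified pivot $v_i$. Call this history $\mathcal H_i$. Given $\mathcal H_i$, the set $S_i\setminus\{v_i\}$ is determined by \cref{eq:implicit-set}, and so is $p_i$. Its membership predicate is a fixed reversible circuit with $O(i+1)$ queries. The words ``accurate prefix'' and ``successful searches'' only restrict which histories we look at. They are events measurable with respect to $\mathcal H_i$, so conditioning on them does not bias the samples drawn afterwards.

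\textbf{Step 1: conditional independence.} By \cref{lem:capped-search}, on the event that the batch completes (every block returns a verified element), each of the $m_i$ samples is an exactly uniform element of $S_i\setminus\{v_i\}$. The samples are independent of each other and of $\mathcal H_i$. This holds because the blocks use fresh randomness and a predetermined cap. Completion of one block depends only on that block's own measurement outcomes, and conditional on success its output is uniform regardless of how many attempts it took.

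To make this rigorous, I would model the batch as a sequence of blocks that each either fail or output a uniform element. The first $m_i$ successful outputs are then i.i.d.\ uniform, and this stays true when we condition on the batch having at least $m_i$ successes. That last event depends only on the success/fail indicators, which are independent of the output values. Hence the observed colours $G(v_i,y_1),\ldots,G(v_i,y_{m_i})$ are i.i.d.\ Bernoulli$(p_i)$ given $\mathcal H_i$ and completion.

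\textbf{Step 2: Hoeffding.} Given Step 1, Hoeffding's inequality gives
\[
\Pr\bigl[|\widehat p_i-p_i|>\varepsilon_i\bigm|\mathcal H_i,\text{ completion}\bigr]\le 2e^{-2m_i\varepsilon_i^2}\le \frac{2\eta}{4r}=\frac{\eta}{2r},
\]
using the choice of $m_i$ in \cref{eq:sample-count}. The bound is uniform over histories. Averaging over histories that satisfy the prefix and success conditions therefore gives, for each fixed $i$,
\[
\Pr\bigl[\text{batch $i$ is reached and completes, and } |\widehat p_i-p_i|>\varepsilon_i\bigr]\le\frac{\eta}{2r}.
\]

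\textbf{Step 3: union bound.} Summing over the $r$ rounds bounds the probability that some completed batch deviates by at most $\eta/2$, as claimed.

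The main obstacle is Step 1. We need exact conditional uniformity together with independence after conditioning on the success of a capped, possibly adaptive, batch procedure. We also need the conditioning event "accurate prefix" not to leak information about future samples. I would handle this by writing the batch explicitly as i.i.d.\ block outcomes of the form (success flag, value), where the value is uniform given success. The event being conditioned on is then a function of the flags and of earlier rounds only.
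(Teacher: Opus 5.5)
Your proposal is correct and follows the paper's proof essentially step for step. You condition on the history $\mathcal H_i$ so that $S_i$, $v_i$, and $p_i$ are fixed, use the exact conditional uniformity from \cref{lem:capped-search} on batch success, apply Hoeffding with \cref{eq:sample-count} to get $\eta/(2r)$, and then remove the conditioning and take a union bound over the $r$ levels. The paper phrases your averaging step as multiplying by $\Pr(\mathcal B_i\mid\mathcal H_i)\le1$ and charging the first inaccurate completed level after an accurate prefix, which is exactly the event your Step 2 bounds. One small wording fix: the samples are conditionally i.i.d.\ uniform on the $\mathcal H_i$-determined set, not literally independent of $\mathcal H_i$.
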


Conditional on the classical history before level $i$ and on batch
success, the samples are independent and uniform on a fixed set, so
Hoeffding's inequality and \cref{eq:sample-count} bound the inaccuracy
probability by $\eta/(2r)$; a union bound over the first inaccurate level
gives the lemma.  Giving each of the $2r+1$ searches and batches failure
probability $\eta/(4r+2)$ adds at most $\eta/2$, so all searches and
estimates are good with probability at least $1-\eta$.  Let $s_i=|S_i|$,
$A_0=1$, and $A_i=\prod_{j<i}a_j$.

\begin{lemma}[Variable recurrence and density]\label{lem:survival}
On every prefix satisfying \cref{lem:concentration},
$s_i>MA_i-1$ and $A_i\ge\frac78\,2^{-i}$.  Consequently $S_r$ is nonempty,
$|S_i\setminus\{v_i\}|/M>\frac38\,2^{-i}$ for every sampling level $i<r$,
and $|S_r|/M\ge2^{-(r+1)}\ge\frac38\,2^{-r}$.
\end{lemma}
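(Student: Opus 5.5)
The plan is to prove $s_i>MA_i-1$ by induction on $i$, then lower-bound $A_i$ directly from the error schedule, and finally read off the density statements.

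\textbf{Induction for $s_i>MA_i-1$.} The base case is $s_0=M=MA_0>M-1$. For the step, suppose $s_i>MA_i-1$ and that level $i$ is accurate. The set $S_i\setminus\{v_i\}$ has $s_i-1$ elements. The chosen colour $c_i$ satisfies $\widehat p\ge1/2$ for that colour, with $\widehat p$ the observed fraction of $c_i$. The accuracy guarantee of \cref{lem:concentration} then gives a true fraction of at least $1/2-\varepsilon_i=a_i$. Hence
\[
 s_{i+1}\ge a_i(s_i-1)>a_i(MA_i-2)=MA_{i+1}-2a_i\ge MA_{i+1}-1,
\]
since $a_i\le1/2$. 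This closes the induction.

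\textbf{Lower bound on $A_i$.} Write
\[
 A_i=2^{-i}\prod_{j<i}(1-2\varepsilon_j)\ge2^{-i}\Bigl(1-2\sum_j\varepsilon_j\Bigr)=2^{-i}(1-2E)=\tfrac78\,2^{-i},
\]
using $\prod(1-x_j)\ge1-\sum x_j$ and $E=1/16$. Note that $\sum_j\varepsilon_j=E$ holds by construction, so the particular shape of the schedule plays no role here.

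\textbf{Consequences.} First, $MA_r\ge\frac78\,2^{-r}M=\frac78\cdot2$, because $M=2^{r+1}$. So $s_r>\frac74-1=\frac34$, and since $s_r$ is an integer, $s_r\ge1$. This gives $|S_r|/M\ge1/M=2^{-(r+1)}$, and $2^{-(r+1)}\ge\frac38\,2^{-r}$ holds since $\frac12\ge\frac38$.

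Second, for a sampling level $i<r$ we need $s_i-1>\frac38\,2^{-i}M$. We have $s_i-1>MA_i-2\ge\frac78\,2^{r+1-i}-2$. Put $t=2^{r+1-i}\ge4$. The required inequality $\frac78t-2\ge\frac38t$ reads $\frac12t\ge2$, i.e.\ $t\ge4$, which holds. The inequality is strict because the bound $s_i>MA_i-1$ is strict. I would double-check that the integer rounding does not break the edge case $i=r-1$, where $t=4$.

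\textbf{Main obstacle.} The subtle point is the conditioning in \cref{lem:concentration}. It is stated for batches reached after successful searches and an accurate prefix, so the induction must run only along prefixes on which every earlier level was accurate. That is exactly the hypothesis "on every prefix satisfying \cref{lem:concentration}", and the induction respects it. One must also check that the accuracy event covers the ties $\widehat p_i=1/2$. It does: with $c_i=1$ when $\widehat p_i\ge1/2$, the chosen colour's empirical fraction is at least $1/2$ in both cases.
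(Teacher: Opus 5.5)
Your proof is correct and follows the paper's argument essentially step for step: the recurrence $s_{i+1}\ge a_i(s_i-1)$ from an accurate majority, the induction $s_{i+1}>a_i(MA_i-2)=MA_{i+1}-2a_i$, the bound $A_i\ge2^{-i}(1-2E)$ via $\prod(1-x_j)\ge1-\sum x_j$, and the consequences read off from $M2^{-r}=2$ and $M2^{-i}\ge4$. The edge case $i=r-1$ you flag needs no extra care: there $\frac78\cdot4-2=\frac38\cdot4$ exactly, and strictness comes from $s_i>MA_i-1$, as you already observed.
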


An accurate majority retains at least a fraction $a_i$ of the non-pivot
vertices, so $s_{i+1}\ge a_i(s_i-1)$; the bound on $A_i$ follows from
$\sum_i\varepsilon_i=1/16$, the invariant propagates because $a_i<1/2$,
and $M2^{-r}=2$ makes the integer $s_r$ exceed $3/4$
(\cref{app:upper-proofs}).

\begin{lemma}[Exact homogeneous output]\label{lem:output}
Whenever the algorithm reaches its final verified output, that output is a
$K$-clique or a $K$-vertex independent set.
\end{lemma}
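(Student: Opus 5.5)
The claim is essentially trivial once we look at how the algorithm reaches its final output, so the plan is to argue directly from the output rule rather than from any probabilistic analysis. The final step of the scale-aware recursion selects a colour occurring at least $K-1$ times among $c_0,\ldots,c_{r-1}$, takes $K-1$ pivots carrying that colour, adds $w$, queries all $\binom K2$ induced edges, and returns the set \emph{only} if this check confirms homogeneity. So the proof has two parts. First, the returned object is a set of exactly $K$ distinct vertices. Second, the explicit verification certifies that it is monochromatic.

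\textbf{Step 1: the colour choice and distinctness.} Since $r=2K-3=2(K-1)-1$, pigeonhole over the two colours guarantees that some colour appears at least $K-1$ times, so the selection step is well defined whenever the algorithm reaches it. Distinctness follows from the definition \cref{eq:implicit-set}. Each $S_i$ excludes $v_0,\ldots,v_{i-1}$, and each pivot $v_i$ is a verified element of $S_i$, so $v_i\notin\{v_0,\ldots,v_{i-1}\}$. Likewise $w$ is a verified element of $S_r$, so it differs from all pivots. Here \emph{verified} matters: the capped search of \cref{lem:capped-search} returns an element only after evaluating the membership predicate, so any returned element truly lies in the set, even if density promises fail. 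Hence the output has exactly $K$ distinct vertices.

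\textbf{Step 2: homogeneity.} The algorithm queries all $\binom K2$ induced pairs through the oracle and returns only if they all carry one colour. Since $G$ is a promised simple undirected graph, the queried values are exactly the adjacencies of the induced subgraph. A set of $K$ distinct vertices whose induced pairs all have colour $1$ is a $K$-clique, and all colour $0$ gives an independent $K$-set.

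I would also note, though it is not needed for the lemma, that structurally the check always passes once we reach it. If $v_j,v_k$ with $j<k$ both carry colour $c$, then $v_k\in S_k\subseteq S_{j+1}$ forces $G(v_j,v_k)=c_j=c$, and $w\in S_r$ gives $G(v_j,w)=c_j$. This does not depend on estimation accuracy. There is no real obstacle here; the only point needing care is ensuring that returned pivots are genuinely members of their sets (the verification in the capped block), since distinctness relies on it. Everything else follows because the output is returned only after direct verification.
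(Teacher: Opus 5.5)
Your proof is correct, but it puts the weight in a different place from the paper's.

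\textbf{The paper's route.} The paper's proof is essentially the remark you label as not needed. For $j<k$ we have $S_k\subseteq S_{j+1}$ and $w\in S_r$, so membership alone gives $G(v_j,v_k)=G(v_j,w)=c_j$. Pigeonhole over the $r=2K-3$ labels then yields $K-1$ pivots of one colour. Hence the selected set is homogeneous before any edge is checked. The final direct check is mentioned only as a safeguard that is independent of earlier errors.

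\textbf{Your route.} You read homogeneity directly off the verify-before-return rule. You add an explicit distinctness argument, using the pivot exclusions built into each $S_i$ and the verified membership of every capped-search output. The paper leaves distinctness implicit, and the edge check by itself does not certify it, since $G(u,u)=0$.

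\textbf{What each buys.} Your argument is airtight for the literal statement, but it only establishes soundness of non-$\bot$ outputs. The proof of \cref{thm:main} already asserts that separately. What \cref{thm:main} actually draws from this lemma, on the good event, is that the final check passes, so the algorithm does not return $\bot$ at its last step. That is exactly your nesting remark, which holds regardless of estimation accuracy. So keep that remark as part of the proof rather than as an aside.
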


\begin{proof}
For $i<j$, the definition of $S_{i+1}$ gives $G(v_i,v_j)=c_i$ and
$G(v_i,w)=c_i$; a colour occurring $\lceil r/2\rceil=K-1$ times among the
$r=2K-3$ colours therefore yields, with $w$, a homogeneous $K$-set, and the
final direct check is independent of earlier errors.
\end{proof}

\begin{proof}[Proof sketch of \cref{thm:main}]
With probability at least $1-\eta$ all searches succeed and all estimates
are accurate; then \cref{lem:survival} guarantees every density promise
used by the sampler and the nonemptiness of $S_r$, and \cref{lem:output}
gives a homogeneous output.  A capped block at level $i$ uses $O(2^{i/2})$
membership tests of $O(i+1)$ queries each, a batch uses $O(m_i)$ blocks,
and $Z<12$ gives $\varepsilon_i^{-2}=O(4^{d_i})$ in
\cref{eq:error-schedule}.  With $h=r-1-i$ and
$4^{\lceil h/6\rceil}\le4\cdot2^{h/3}$, the batches cost
\[
 O\!\left(\log\frac r\eta\sum_{i<r}(i+1)2^{i/2}4^{d_i}\right)
 =O\!\left(2^{r/2}\log\frac r\eta\sum_{h<r}(r-h)2^{-h/6}\right)
 =O\!\left(r2^{r/2}\log\frac r\eta\right),
\]
the amplified searches cost the same order, and $r=2K-3$ gives
$O(2^KK\log(K/\eta))$.  The full computation is in \cref{app:upper-proofs}.
\end{proof}

%% file: sections/04_lower_and_random.tex
\section{Lower bounds and a quantum speedup on random graphs}\label{sec:lower}

\subsection{A randomized classical lower bound}
\label{sec:classical-lower}

The following bound is the random-Painter estimate of Conlon, Fox,
Grinshpun, and He \cite[proof of Theorem~1]{ConlonFoxGrinshpunHe2019} for
online Ramsey numbers, transported to the query model.  Because that
estimate concerns uniformly random colourings, the bound holds on the
uniform distribution $G(N,1/2)$ and not only in the worst case.

\begin{proposition}[Randomized classical lower bound]
\label{prop:classical-lower}
Let $K\ge8$ and let $T=\lfloor2^{(2-\sqrt2)K-2}\rfloor$.  For every
$N\ge K$, every randomized algorithm that makes at most $T$ edge queries
outputs a homogeneous $K$-set of $G\sim G(N,1/2)$ with probability at most
$5/8$.  Consequently, for $M=4^{K-1}$, every randomized edge-query
algorithm that finds a homogeneous $K$-set with worst-case probability at
least $2/3$ uses
$\Omega(2^{(2-\sqrt2)K})=\Omega(M^{1-1/\sqrt2})$ queries.
\end{proposition}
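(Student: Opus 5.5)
We follow the martingale weight argument of \cite{ConlonFoxGrinshpunHe2019}, with the algorithm playing Builder and the uniformly random colouring playing Painter. By Yao's principle it suffices to treat deterministic adaptive algorithms: a randomized algorithm is a mixture of deterministic ones, and its success probability on $G\sim G(N,1/2)$ is the average of theirs. Under $G(N,1/2)$, every newly queried pair receives a fresh fair colour, independent of the past, so the transcript is a random walk on a decision tree. The output is a $K$-set $X$. The algorithm succeeds only if all $\binom K2$ edges inside $X$ are monochromatic. Some of these edges may have been queried and some not.

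\textbf{Step 1: unqueried output edges.} Suppose the output $X$ contains $u$ pairs that were never queried. Conditioned on the transcript, these pairs are fresh fair bits. Hence the success probability is at most $2\cdot 2^{-u}$, up to the constraint from the queried edges. So an algorithm gains little from leaving many edges unqueried. We will fix a threshold $c$ and split into two cases.

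\textbf{Step 2: queried monochromatic sets via weights.} For each $K$-set $X$ and each colour, define a weight. It is $0$ if some queried edge of $X$ has the wrong colour. Otherwise it is $2^{-(\text{number of unqueried edges of }X)}$. The total weight $W_t$ after $t$ queries is a martingale under random colouring. Initially $W_0=2\binom NK 2^{-\binom K2}$, but this is too large when $N$ is large, so we use the refined count from \cite{ConlonFoxGrinshpunHe2019}. We only weight sets $X$ in which at least $K-c$ vertices are incident to queried edges inside $X$. Since the queried graph has at most $T$ edges, the number of such sets can be bounded by choosing the vertices along queried edges. This gives the paper's estimate: the probability of exposing a monochromatic $K$-set with at most $c(c-1)$ unqueried internal pairs is at most
\[
2\cdot2^{-\binom K2+c(c-1)}(2T)^{K-c}.
\]
The $c$ uncovered vertices contribute at most $\binom c2+c(K-c)$ unqueried pairs. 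The bound is on the expectation of the final weight, via Markov's inequality and the optional stopping theorem. Combined with Step 1, the remaining outputs have at least $c(K-c)$ unqueried edges, so their success probability is at most $2^{-c(K-c)+1}$, which is tiny.

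\textbf{Step 3: optimize.} Take $c\approx K/\sqrt2$ suitably rounded, with $c\le K/2$ handled as in the paper's statement. Substitute $T=2^{(2-\sqrt2)K-2}$. The exponent $-\binom K2+c(c-1)+(K-c)((2-\sqrt2)K-1)$ is then negative enough that the total is at most $5/8$ for $K\ge8$.

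The consequence follows from the first part. A worst-case success probability of $2/3$ implies success probability at least $2/3>5/8$ on $G(M,1/2)$. Therefore more than $T$ queries are needed, and $2^{(2-\sqrt2)K}=\Theta(M^{1-1/\sqrt2})$ because $M=4^{K-1}$.

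\textbf{Main obstacle.} The hardest step is the counting in Step 2, that is, bounding the number of $K$-sets whose vertices are mostly covered by $T$ adaptively chosen queried edges. The choice of covered vertices must be charged to query indices, giving the $(2T)^{K-c}$ factor. This charging must also survive adaptivity. The plan is to order the vertices of $X$ by the first query touching them and encode each vertex by the index of that query together with one bit for the endpoint. After that, the explicit constant check for $K\ge8$ is the remaining risk.
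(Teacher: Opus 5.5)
\textbf{There is a genuine gap: Step 2 asserts the key estimate without a mechanism that can produce it.} You claim $2\cdot2^{-\binom K2+c(c-1)}(2T)^{K-c}$ (``this gives the paper's estimate''), but your sketch cannot give it.

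\emph{Your encoding recovers only the old bound.} Encoding each vertex of a fully exposed set by the query that first touches it, plus an endpoint bit, costs a factor $2T$ per vertex. Every vertex of a fully exposed set is covered, so restricting to sets with at least $K-c$ covered vertices removes nothing. Combined with the per-set martingale, you get only $2\cdot2^{-\binom K2}(2T)^K$, the case $c=0$. That yields $T=\Omega(2^{K/2})=\Omega(M^{1/4})$, the previously known black-box bound. Moreover, under adaptivity you cannot multiply a transcript-dependent count of sets by a per-set probability.

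\emph{The missing idea is to pay one query index for two vertices while controlling the weight.} The paper tracks the matching number $\nu(U,t)$ of the queried graph inside $U$. It splits $\mathbb E\,w_{k,c}(T)$ over the stopping time $\tau(U)$ at which $\nu$ first reaches $c$, using that $w(U,\cdot)$ is a martingale. At that time the critical pair $e_t=\{x,y\}$ lies in every maximum matching. Maximality then allows at most $2(c-1)$ previously queried pairs between $\{x,y\}$ and $U'=U\setminus\{x,y\}$. Hence $w(U,t)\le\mathbf 1[e_t\text{ has colour }1]\,2^{-(2k-2c-2)}\,w(U',t-1)$, and $U\mapsto U'$ is injective for fixed $t$. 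This gives $\mathbb E\,w_{k,c}(T)\le T\,2^{-(2k-2c-1)}\,\mathbb E\,w_{k-2,c-1}(T)$. Iterating $c$ times and renaming vertices so that $V'\le 2T$ yields the bound. The factor $2^{c(c-1)}$ is the accumulated charging loss $\sum_{i<c}2(c-1-i)$. It is not an allowance of $c(c-1)$ unqueried pairs, as your Step 2 reads it.

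\textbf{Step 3 is also wrong as written.} The choice $c\approx K/\sqrt2$ violates $c\le K/2$, which is needed because $\nu\le\lfloor K/2\rfloor$. At that value the quadratic part of the exponent $-\binom K2+c(c-1)+(K-c)\bigl((2-\sqrt2)K-1\bigr)$ equals $(\sqrt2-1)^2K^2>0$, so the bound is vacuous. The right choice is $c=\lfloor(1-1/\sqrt2)K\rfloor$, so that $K-c\approx K/\sqrt2$. With $\alpha=1-1/\sqrt2$, the identity $\alpha^2-2\alpha+\tfrac12=0$ cancels the quadratic term. The base-$2$ logarithm of the bound is then $1-K/2+\epsilon^2\le-2$ for $K\ge8$.

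\textbf{Your case split by covered vertices is unnecessary.} Once the expected number of fully exposed monochromatic $K$-sets is at most $1/4$, Markov's inequality bounds that event by $1/4$. Any other output has an unexposed internal pair, so its conditional success probability is at most $1/2$. This gives $\tfrac14+\tfrac34\cdot\tfrac12=\tfrac58$. Your averaging over deterministic algorithms and the final deduction of $\Omega(M^{1-1/\sqrt2})$ are fine.
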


The self-contained proof (\cref{app:classical-lower}) tracks, through the
matching number of the queried graph, the conditional probability that a
vertex set becomes a monochromatic clique: a $T$-query algorithm fully
exposes a monochromatic $K$-set with probability at most
$2\cdot2^{-\binom K2+c(c-1)}(2T)^{K-c}$ for every integer $c\le K/2$, which
is at most $1/4$ at $c=\lfloor(1-1/\sqrt2)K\rfloor$, and otherwise some pair
of the output is an unexposed fair coin, so the success probability is at
most $\frac14+\frac34\cdot\frac12=\frac58$.  The exponent
$1-1/\sqrt2\approx0.2929$ improves the $M^{1/4}$ black-box bound of
\cite{ImpagliazzoNaor1988,KomargodskiNaorYogev2019}; no worst-case
separation from \cref{thm:main} is claimed.

\subsection{A quantum speedup on random graphs}
\label{sec:random-graphs}

\begin{proposition}[Quantum greedy search on random graphs]
\label{prop:random-graph}
Let $K\ge2$, $M=4^{K-1}$, and $0<\eta<1/2$, and let $G\sim G(M,1/2)$.
There is a quantum algorithm which makes $O(K2^{K/2}\log(K/\eta))$ edge
queries on every input, never outputs a set that is not a $K$-clique, and
outputs a $K$-clique with probability at least $1-\eta-\pi_K$ over $G$ and
its internal randomness, where $\pi_K=2M^{K-1}e^{-2^{K-2}/12}$ satisfies
$\pi_K<e^{-100}$ for all $K\ge14$.
\end{proposition}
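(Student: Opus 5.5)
The algorithm is the greedy clique-growing procedure over implicit common neighbourhoods, and the proof has two parts: a structural good event on $G$ and a query count. Put $v_1$ equal to vertex $1$. Suppose $v_1,\dots,v_j$ have been chosen. Let $S_j=\{u\notin\{v_1,\dots,v_j\}: G(v_\ell,u)=1 \text{ for all } \ell\le j\}$ be their common neighbourhood. Use \cref{lem:capped-search} with density promise $\lambda_j=\tfrac12 2^{-j}$ to pick a uniform $v_{j+1}\in S_j$; the membership predicate of $S_j$ costs $O(j)$ queries. Amplify each block $O(\log(K/\eta))$ times so that, whenever the promise holds, it fails with probability at most $\eta/K$. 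After $K$ vertices are chosen, query all $\binom K2$ edges and output the set only if it is a clique; otherwise output $\bot$. Every block has a predetermined cap, so the algorithm terminates on every input. The final check means it never outputs a non-clique.

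\emph{Query count.} Level $j$ uses $O(\log(K/\eta))$ blocks, each making $O(j/\sqrt{\lambda_j})=O(j2^{j/2})$ queries. Summing over $j<K$ gives a geometric series dominated by its last term, $O(K2^{K/2}\log(K/\eta))$. The verification adds $O(K^2)$ queries, which is of lower order.

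\emph{Good event.} Let $\mathcal E$ be the event that for every $j\le K-1$ and every set $J$ of $j$ vertices, the common neighbourhood of $J$ inside $[M]\setminus J$ has size at least $\tfrac12(M-j)2^{-j}$. For fixed $J$, this size is $\mathrm{Bin}(M-j,2^{-j})$. The multiplicative Chernoff bound $\Pr[X<\mu/2]\le e^{-\mu/8}$ applies with $\mu=(M-j)2^{-j}\ge \tfrac12 M 2^{-(K-1)}=2^{K-2}$, since $M2^{-(K-1)}=2^{K-1}$. Taking a union bound over at most $\sum_{j<K}M^j\le 2M^{K-1}$ sets $J$ bounds $\Pr[\neg\mathcal E]$ by $2M^{K-1}e^{-2^{K-2}/12}=\pi_K$; the constant $12$ leaves slack for the $M-j$ versus $M$ loss. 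On $\mathcal E$, every density promise $\lambda_j$ is valid, and in particular $S_{K-1}\neq\emptyset$. A union bound over the $K$ amplified blocks then gives success with probability at least $1-\eta$. The algorithm's choices are adaptive, but $\mathcal E$ is a property of $G$ alone and quantifies over all $J$, so adaptivity does no harm. The overall success probability is therefore at least $1-\eta-\pi_K$.

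\emph{Numerical claim and expected obstacle.} The inequality $\pi_K<e^{-100}$ for $K\ge14$ is a direct check. We have $\log\pi_K=\log 2+(K-1)(2K-2)\log 2-2^{K-2}/12$. At $K=14$ this is about $235-341<-100$, and the negative term grows exponentially while the positive term is only quadratic. The main care point is the density-normalisation step: the promise has to be stated relative to the sampling domain $[M]$, so that $|S_j|/M\ge \tfrac12\cdot 2^{-j}(1-j/M)$ is at least a constant multiple of $2^{-j}$. The constant in the Chernoff exponent must also be tuned so that it matches $\pi_K$ exactly as stated.
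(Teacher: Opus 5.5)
Your proposal is correct and follows essentially the same route as the paper: a Chernoff-plus-union-bound expansion event over at most $2M^{K-1}$ vertex sets, then greedy clique growth by amplified capped searches at density $\Theta(2^{-j})$, with cost $O\bigl(\log(K/\eta)\sum_{j}(j+1)2^{j/2}\bigr)$. The small differences are inessential: the paper uses threshold $M2^{-j-1}$ with Chernoff parameter $\gamma=1/3$ and also searches for $v_1$, whereas you use threshold $\mu/2$, fix $v_1$, and add a redundant final check.

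The only slip is the one you flag yourself. On your event the density satisfies $|S_j|/M\ge\frac12(1-j/M)2^{-j}$, which can fall just below $\lambda_j=\frac12 2^{-j}$. Taking $\lambda_j=2^{-j-2}$ fixes this, since $j\le K-1\le M/4$, and it changes the query cost only by a constant factor.
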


With probability at least $1-\pi_K$, every $j$-set with $j<K$ has at least
$M2^{-j-1}$ common neighbours, and the algorithm then grows a clique one
vertex at a time by the capped search of \cref{lem:capped-search} with
density parameter $2^{-j-1}$ at level $j$ (\cref{app:random-graph}): no
majority estimation is needed and the depth is $K$ rather than $2K-3$.

\begin{corollary}[Separation on random graphs at the Ramsey scale]
\label{cor:separation}
Let $n\ge26$, $N=2^n$, $K=\lfloor n/2\rfloor+1$, $0<\eta<1/2$, and
$G\sim G(N,1/2)$.
\begin{enumerate}
\item A quantum algorithm making
$O(N^{1/4}\log N\log(\log N/\eta))$ edge queries outputs a homogeneous
$K$-set of $G$ with probability at least $1-\eta-\pi_K$.
\item Every randomized algorithm making at most
$\lfloor2^{(2-\sqrt2)K-2}\rfloor=\Omega(N^{1-1/\sqrt2})$ edge queries
outputs a homogeneous $K$-set of $G$ with probability at most $5/8$.
\end{enumerate}
Since $1/4<1-1/\sqrt2$, the bounded-error quantum query complexity of the
Ramsey relation on $G(N,1/2)$ is polynomially smaller than its bounded-error
randomized query complexity.
\end{corollary}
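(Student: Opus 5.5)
The corollary will follow by combining \cref{prop:random-graph} for part~1 with \cref{prop:classical-lower} for part~2, both instantiated at $K=\lfloor n/2\rfloor+1$. The one substantive issue is that those propositions are stated for a graph on exactly $M=4^{K-1}$ vertices, while here $N=2^n$ may be as large as $2M$ (when $n$ is odd).

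\textbf{Part 1.} Note first that $M=4^{K-1}=2^{2\lfloor n/2\rfloor}\le N$. Hence the induced subgraph of $G\sim G(N,1/2)$ on the first $M$ vertices is distributed exactly as $G(M,1/2)$, and the algorithm can simulate edge queries to it by querying $G$ directly. Run the algorithm of \cref{prop:random-graph} on this subgraph. Any $K$-clique it returns is a clique of $G$, and so a homogeneous $K$-set.

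For the success probability we need $\pi_K<e^{-100}$, which \cref{prop:random-graph} provides once $K\ge14$. This holds here, since $n\ge26$ gives $K\ge14$; this is presumably why the hypothesis $n\ge26$ appears. The stated bound $1-\eta-\pi_K$ then carries over unchanged.

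For the cost, the algorithm uses $O(K2^{K/2}\log(K/\eta))$ queries. Since $2^{K}\le 2\sqrt N$, we get $2^{K/2}=O(N^{1/4})$. Also $K=O(\log N)$ and $\log(K/\eta)=O(\log(\log N/\eta))$, exactly as in the proof of \cref{cor:succinct}. Together these give the claimed bound.

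\textbf{Part 2.} We need $K\ge8$ and $N\ge K$, both immediate for $n\ge26$. Then \cref{prop:classical-lower} applies for every $N\ge K$ and yields the $5/8$ bound directly. It remains to show $\lfloor 2^{(2-\sqrt2)K-2}\rfloor=\Omega(N^{1-1/\sqrt2})$. Since $K\ge n/2$,
\[
2^{(2-\sqrt2)K}\ge 2^{(2-\sqrt2)n/2}=N^{1-1/\sqrt2},
\]
and the constant factor $1/4$ and the floor are absorbed into the $\Omega$.

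\textbf{The separation.} Suppose a randomized algorithm succeeds with probability at least $2/3$ on $G(N,1/2)$. Because $2/3>5/8$, part~2 forces it to make more than $T=\Omega(N^{1-1/\sqrt2})$ queries. Meanwhile, part~1 with constant $\eta$ gives a quantum algorithm with $\widetilde O(N^{1/4})$ queries and success probability at least $2/3$, since $\pi_K$ is negligible. The claimed polynomial gap then follows from $1/4<1-1/\sqrt2\approx0.2929$.

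I do not expect a real obstacle. The only points needing care are the restriction to $M\le N$ vertices and the check that $\pi_K$ is small, and the latter is where the hypothesis $n\ge26$ is used.
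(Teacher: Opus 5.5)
Your proposal is correct and follows the paper's proof essentially step for step. You restrict to the first $M=4^{K-1}\le N$ vertices and apply \cref{prop:random-graph} for part~1, invoke the distributional \cref{prop:classical-lower} (valid for every $N\ge K$ once $K\ge8$) for part~2, and obtain the separation from $K\ge14\Rightarrow\pi_K<e^{-100}$ together with a fixed $\eta$ such as $1/4$, so that $1-\eta-\pi_K>2/3>5/8$. One small correction: the bound $1-\eta-\pi_K$ in part~1 does not need $\pi_K$ to be small, so the hypothesis $n\ge26$ is used only in the final separation sentence, which is also where you use it.
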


The corollary applies \cref{prop:random-graph} to the first $4^{K-1}\le N$
vertices and \cref{prop:classical-lower} to $G(N,1/2)$
(\cref{app:random-graph}).

\begin{remark}[Classical greedy search]\label{rem:classical-greedy}
On the same expanding graphs, classical rejection sampling finds $v_{j+1}$
with $\lceil2^{j+1}\ln(K/\eta)\rceil$ trials of at most $j$ queries each,
except with probability $\eta/K$; so for $G\sim G(M,1/2)$ a classical
greedy search outputs a verified $K$-clique with probability at least
$1-\eta-\pi_K$ using $O(K2^K\log(K/\eta))=\widetilde O(N^{1/2})$ queries,
the classical entry of \cref{tab:bounds}.
\end{remark}

\subsection{A quantum lower bound from collision finding}
\label{sec:quantum-lower}

The reduction of Jain et al.\ \cite[Lemma~III.1]{JainLiRobereXun2024},
generalizing Komargodski, Naor, and Yogev
\cite{KomargodskiNaorYogev2019}, maps a function $h\colon[N]\to[H]$ and a
fixed graph $A_0$ on $[H]$ to the graph
\begin{equation}\label{eq:product-graph}
 A_h(u,v)=\mathbf 1[h(u)=h(v)]\lor A_0(h(u),h(v)),
 \qquad A_h(u,u)=0,
\end{equation}
on $[N]$.  Composed with the quantum collision lower bound, it gives the
best quantum lower bound we know; \cref{sec:jlrx} compares it with the
bound reported in \cite{JainLiRobereXun2024}.  A lower bound for target
order $\lfloor n/2\rfloor$ applies a fortiori to every larger target.

\begin{proposition}[Quantum lower bound from collision finding]
\label{prop:quantum-lower}
Let $N=2^n$ with $n\ge8$.  Every quantum algorithm that, given coherent
edge-oracle access to a simple graph on $N$ vertices, outputs a clique or
independent set of order $\lfloor n/2\rfloor$ with probability at least
$2/3$ on every input makes $\Omega(N^{1/12})$ edge queries.
\end{proposition}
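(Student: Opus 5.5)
We reduce from 2-to-1 collision finding. Take $H=N/2$ and let $A_0$ be a fixed graph on $[H]$ with no clique or independent set of order $k_0$, where $k_0$ is chosen so that the product graph of \cref{eq:product-graph} has no homogeneous $\lfloor n/2\rfloor$-set that avoids a collision. Erd\H{o}s' probabilistic bound gives a graph on $H$ vertices with no homogeneous set of order about $2\log_2 H$, but we need something far smaller. So we instead use a small host: $A_0$ on $[H']$ with $H'=2^{\Theta(n)}$ and homogeneous number below $n/4$. Concretely, take $A_0$ Ramsey-good on $2^{n/8}$ vertices, with no homogeneous set of order $n/4$, and let $h\colon[N]\to[H']$ be a function that is promised 2-to-1 on its image, or more generally has few preimages.

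In $A_h$, same-fibre pairs are edges, and distinct fibres follow $A_0$. An independent set therefore uses distinct fibres and is independent in $A_0$, so it has size $<n/4$. A clique of order $\lfloor n/2\rfloor$ maps onto a clique of $A_0$ of size $<n/4$. Since each fibre has multiplicity at most $m$, the clique covers at least $\lfloor n/2\rfloor/m$ fibres, and pigeonhole forces two of its vertices into one fibre, which is a collision of $h$. The multiplicity must be tuned so this is forced, and the parameter calculation in \cref{sec:jlrx} says multiplicity two is optimal. I would follow Jain et al.: let $h$ be uniformly random $[N]\to[H]$ with $H$ chosen so that the typical max-load is $2$, or use the $2$-to-$1$ promise problem. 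Every edge query to $A_h$ costs two queries to $h$. Hence a $Q$-query Ramsey algorithm yields a $2Q$-query collision finder, plus verification.

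To reach the exponent $1/12$, I would use a random (not 2-to-1) function with small range and the Liu--Zhandry bound. Finding a $2$-collision in a random function $[N]\to[H]$ needs $\Omega(H^{1/3})$ queries. We need the homogeneous-set guarantee with $A_0$ on $[H]$ having no homogeneous set of order $\lfloor n/2\rfloor$ counted with loads. With loads of up to $\ell$ per fibre (random-function max-load about $n/\log n$), a collision-free homogeneous set is a set of distinct fibres. So the absence of a homogeneous $\lfloor n/2\rfloor$-set in $A_0$ suffices. That requires $H<2^{n/4}$, since Ramsey forces homogeneous sets of order $\tfrac12\log_2 H$. Then $H^{1/3}=2^{n/12}=N^{1/12}$, matching the target. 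So the plan is: pick $H=2^{\lfloor n/2\rfloor/2}$ roughly, with an explicit or existentially fixed $A_0$ on $[H]$ having no homogeneous set of order $\lfloor n/2\rfloor$. Erd\H{o}s gives one since $H\le2^{k/2}$ with $k=\lfloor n/2\rfloor$, valid for $k\ge3$, which is why we require $n\ge8$. Any homogeneous $k$-set of $A_h$ then contains a collision, which we read off with $O(k^2)$ extra $h$-queries.

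The main obstacle is the accounting of the collision lower bound for a random function with range much smaller than domain. In that regime collisions are plentiful among $N$ inputs, so we must use the version stating that finding a collision with $q$ queries succeeds with probability $O(q^3/H)$ regardless of domain size. For constant success $2/3$ this gives $q=\Omega(H^{1/3})$, and the factor-two overhead per edge query and the verification queries are absorbed into the constant. One also checks that $A_h$ is a valid simple graph on every $h$, so the worst-case guarantee applies to each input in the support. Finally, the success probability of the reduction, averaged over the random $h$, is at least $2/3$, which completes the contradiction for $Q=o(N^{1/12})$.
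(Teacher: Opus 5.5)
Your final plan is correct and is essentially the paper's proof. It uses the same product graph $A_h$ with uniformly random $h\colon[N]\to[H]$, and the same range, once your $H\approx2^{\lfloor n/2\rfloor/2}$ is rounded to $2^{\lfloor\lfloor n/2\rfloor/2\rfloor}=2^{\lfloor n/4\rfloor}$. It also uses an Erd\H{o}s graph $A_0$: you forbid homogeneous sets of order $\lfloor n/2\rfloor$, while the paper forbids the order $2\lfloor n/4\rfloor\le\lfloor n/2\rfloor$, and either choice works. The rest matches too: the same distinct-fibre/clique pigeonhole argument, and Zhandry's $\Omega(H^{1/3})$ bound for random functions with $N\ge H$.

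Do cut the abandoned detours, because they contain two false (though unused) remarks. First, a random $h\colon[N]\to[H]$ with $H\approx2^{n/4}$ has fibre loads about $N/H\approx2^{3n/4}$, not $n/\log n$. Second, the restriction $H\le2^{k/2}\approx2^{n/4}$ is what the Erd\H{o}s lower bound $R(k,k)>2^{k/2}$ can certify; Ramsey's theorem does not force it, since it only rules out $H\ge4^{k-1}$.
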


In the proof (\cref{app:quantum-lower}), $H=2^{\lfloor n/4\rfloor}$ and
$A_0$ is an Erd\H{o}s graph on $[H]$ with no homogeneous set of order
$2\log_2H$.  Independent sets of $A_h$ have distinct $h$-values and map to
independent sets of $A_0$, cliques map to cliques, so every homogeneous set
of order at least $2\log_2H$ contains two vertices with the same
$h$-value; a $q$-query Ramsey algorithm therefore finds a collision of a
random $h$ with $O(q+n)$ queries, which requires
$\Omega(H^{1/3})=\Omega(N^{1/12})$ \cite{Zhandry2015}.

%% file: sections/05_related_work.tex
\section{Related work and the reported lower bound}
\label{sec:prior-art}

\paragraph{Quantum algorithms for Ramsey problems.}
Quantum approaches based on adiabatic computation, annealing, or quantum
counting search for colourings that establish or test Ramsey-number bounds
\cite{GaitanClark2012,BianEtAl2013,Wang2016,QuLiWangBaoCao2013,
RanjbarMacreadyClarkGaitan2016,PionMniszewski2025}; our input is one fixed
graph given by an oracle.  Generic quantum algorithms
for clique, independent set, and fixed-subgraph containment
\cite{Doern2005,ChildsEisenberg2005,LeeMagniezSantha2012,Zhu2012} and
quantum backtracking \cite{Montanaro2018} operate in a closer oracle model;
quantum walks also give sublinear-query algorithms for fixed
sub-hypergraph containment \cite{LeGallNishimuraTani2016}.  We are not aware
of an earlier implementation of the Erd\H{o}s--Szekeres
recursion over implicit sets in the adjacency-query model.

\paragraph{Constructive Ramsey search.}
Ramsey search is connected to weak and structured pigeonhole principles
\cite{Krajicek2001,Krajicek2005}, its white-box hardness follows from
collision-resistant hashing \cite{KomargodskiNaorYogev2019}, and it belongs
to the polynomial long-choice class PLC
\cite{PasarkarPapadimitriouYannakakis2023}.  The closest matching query
relation is studied in \cite{JainLiRobereXun2024}, which shows that it is
not black-box reducible to PIGEON and raises the question of its query
complexity in the deterministic, randomized, and quantum models.  The
online Ramsey game (see \cite{ConlonFoxGrinshpunHe2019}), in which Builder
queries edges and Painter answers adversarially, is the deterministic
version of this question with an unbounded vertex set; its best known lower
bound is the random-Painter bound used in \cref{prop:classical-lower}.

\paragraph{Comparison with a reported quantum lower bound.}
\label{sec:jlrx}
\emph{Model and parameter alignment.}
Definition~II.6 of \cite{JainLiRobereXun2024} uses $N=2^n$ vertices, and the
convention following Definition~II.7 sets the default target to $K=n/2$.
The unnumbered ``Query complexity of RAMSEY'' paragraph in Section~III
(p.~415) reports an $N^{1-o(1)}$ quantum lower bound, based on its
Lemma~III.1 and the multicollision lower bound of \cite{LiuZhandry2019}.
Lemma~III.1 produces the product graph \cref{eq:product-graph}, a simple
graph whose coherent queries cost $O(1)$ queries to $h$, so the reduction
lies within our promise; \cref{cor:unpromised} covers arbitrary circuits,
and \cref{cor:succinct} meets the default target after discarding one
vertex.  The two formulations therefore align.

\emph{Exponent obtained by direct substitution.}
Let $H$ denote the multicollision range size and $G$ the number of vertices
in the resulting Ramsey instance.  For fixed multiplicity $t$, the
sufficient condition in Theorem~I.3 of \cite{JainLiRobereXun2024} is
$G\ge H^{4t}/4^t$.  At the boundary $G=\Theta_t(H^{4t})$, the exponent
$\alpha_t=(2^{t-1}-1)/(2^t-1)$ of \cite{LiuZhandry2019} transfers to $G$ as
$\alpha_t/(4t)=(2^{t-1}-1)/(4t(2^t-1))\le1/24$ for $t\ge2$, with equality
at $t=2$; for $t\ge3$, $\alpha_t<1/2$ gives $\alpha_t/(4t)<1/(8t)\le1/24$.  Increasing $G$ relative to $H$ only reduces
the transferred exponent, and the multicollision theorem is stated for
fixed $t$, so the cited result does not support a growing-multiplicity
limit either.  The printed condition is not tight: the proof of
Lemma~III.1 only uses that a homogeneous set of the product graph maps to
at most $K_0-1$ values of $h$, where $K_0=2\log_2H$ is the forbidden order
in $A_0$, so a homogeneous set of order $(t-1)(K_0-1)+1$ already forces a
$t$-collision, the reduction only needs $G=\Theta_t(H^{4(t-1)})$, and the
transferred exponent improves to $\alpha_t/(4(t-1))$, which is $1/12$ at
$t=2$ and below $1/(8(t-1))\le1/16$ for $t\ge3$.  Direct composition of
the cited ingredients therefore yields an exponent at most $1/12$, rather
than $1-o(1)$; \cref{prop:quantum-lower} is exactly this $t=2$ bound, and
\cref{thm:main} rules out an $N^{1-o(1)}$ lower bound at these parameters.
The remark is not used elsewhere in \cite{JainLiRobereXun2024}, and the
black-box separation theorems of that paper are unaffected.

%% file: sections/06_discussion.tex
\section{Discussion and open questions}
\label{sec:scope}

The query improvement rests on storing each candidate set as the
conjunction of the adjacency constraints created by earlier pivots, and on
allocating estimation error according to the cost of each level.
\Cref{tab:bounds} leaves two central questions.

\paragraph{A worst-case quantum advantage.}
Can the quantum and randomized query complexities be separated in the worst
case?  The randomized complexity on $N=2^n$ vertices lies between
$\Omega(N^{1-1/\sqrt2})$ and $O(N)$, while the quantum upper bound is
$\widetilde O(\sqrt N)$.  A randomized lower bound of
$N^{1/2+\Omega(1)}$ would establish such a separation.  The implicit-set
recursion explains the difference between the available upper bounds:
classical rejection sampling at depth $i$ needs $2^i$ membership tests,
whereas quantum search needs $2^{i/2}$.

\paragraph{The optimal quantum exponent.}
Is $\sqrt N$ optimal in the worst case?  The current quantum bounds are
$\Omega(N^{1/12})$ and $O(\sqrt N\log N\log\log N)$.  The final levels of
the recursion search sets of constant size inside a universe of size
$\Theta(N)$, but these sets are defined by $O(\log N)$ adjacency
constraints, so the optimality of Grover search does not establish a lower
bound for this relation.

\paragraph{Scope.}
The distributional separation on $G(N,1/2)$ in \cref{cor:separation} and
the multicolour bound in \cref{cor:multicolour} hold under their stated
contracts.  \Cref{app:experiments} reports small sanity-check experiments;
the query bounds rest on the proofs in the appendices.  The paper claims
query bounds only, not gate complexity or hardware performance.

%% file: appendices/A_capped_search.tex
\section{Proofs of the sampler lemma and the encoding corollary}\label{app:capped-search}

\begin{proof}[Proof of \cref{lem:capped-search}]
Begin in the uniform superposition on $[M]$ and apply the randomized-iteration
unknown-solution search of \cite{BBHT1998}.  When the marked density is at
least $\lambda$, truncate the geometric iteration schedule once its maximum
iteration count reaches a sufficiently large constant multiple of
$1/\sqrt\lambda$.  The cited analysis gives a success probability $p_0>0$,
independent of $M$ and $\lambda$, using $O(1/\sqrt\lambda)$ calls to the
membership predicate.  Because the cap is fixed before any oracle outcome is
observed, the block terminates for every predicate.

For each fixed number of Grover iterations, all marked basis states have the
same amplitude.  Randomizing the iteration count preserves this permutation
symmetry.  After measuring a candidate, evaluate the exact membership
predicate and accept only a marked outcome.  Conditional on acceptance, the
output is therefore uniform on the marked set $T$.  Independent blocks
reinitialize all registers and random choices, so their accepted outputs are
independent conditional on the block-success indicators.

To collect a batch, run
\[
 B=C\bigl(m+\log(1/\delta)\bigr)
\]
independent capped blocks, where the absolute constant $C$ is chosen from
$p_0$.  The number of accepted blocks stochastically dominates
$\operatorname{Bin}(B,p_0)$.  A Chernoff lower-tail bound
\cite{Chernoff1952} makes the probability of fewer than $m$ acceptances at
most $\delta$.  Conditional on any success pattern with at least $m$
acceptances, the first $m$ accepted outputs are independent and uniform on
$T$.
\end{proof}

\begin{proof}[Proof of \cref{cor:unpromised}]
Given a circuit $C(u,v)$, define the canonical simple graph
\[
 A(u,v)=
 \begin{cases}
 C(\min\{u,v\},\max\{u,v\}),&u\ne v,\\
 0,&u=v.
 \end{cases}
\]
One query to $A$ uses $O(1)$ queries to $C$, including uncomputation.  Run
\cref{thm:main} on $A$ and let $W$ be the returned homogeneous set.  Query the
diagonal and both ordered entries induced by $W$, using $O(K^2)$ additional
queries.  A self-loop or an asymmetric pair is a local invalidity certificate.
Otherwise $C$ agrees with $A$ on $W$, so $W$ is a valid clique or
independent-set witness.  The verification cost is absorbed by the bound in
\cref{thm:main}.
\end{proof}

%% file: appendices/B_upper_proofs.tex
\section{Proofs for \texorpdfstring{\cref{sec:algorithm}}{Section 3}}
\label{app:upper-proofs}

\subsection{The size-biased recursion}

\begin{proof}[Proof of \cref{lem:size-biased-survival}]
The case $d=0$ is immediate.  Suppose $d\ge1$ and $s\ge2$, and let the two
colour classes among the $s-1$ non-pivot vertices have sizes $a$ and $b$.
A uniform sample selects these classes with probabilities $a/(s-1)$ and
$b/(s-1)$.  The induction hypothesis therefore gives
\[
 P_d(s)\ge
 \frac{aP_{d-1}(a)+bP_{d-1}(b)}{s-1}
 \ge
 \frac{(a-2^{d-1}+1)_+ +(b-2^{d-1}+1)_+}{s-1}.
\]
Because $(x)_++(y)_+\ge(x+y)_+$ and $a+b=s-1$, the numerator is at least
$(s-2^d+1)_+$.  Replacing the denominator $s-1$ by $s$ weakens the bound and
proves the claim; the definitions cover $s\le1$.  Finally, $M=2^{r+1}$, so
the bound at $(d,s)=(r,M)$ is
$(M-2^r+1)/M=1/2+1/M$.
\end{proof}

Exact conditional uniformity of each successful capped search couples an
implemented run to the ideal recursion.  Assign each of the $r$ searches
failure probability at most $1/(100r)$.  A union bound decreases the
completion probability by at most $1/100$, so a single fixed-cap run succeeds
with probability at least $0.49$.  Independent repetition, followed by exact
verification, amplifies this probability to $1-\eta$ using
$O(\log(1/\eta))$ runs.  On every completed run, nesting gives
$G(v_i,v_j)=G(v_i,w)=c_i$ for $j>i$; the pigeonhole argument in
\cref{lem:output} then gives a homogeneous output.

\begin{proof}[Proof of \cref{prop:size-biased-algorithm}]
A capped sample from a nonempty subset of the $M$-element universe costs
$O(\sqrt M)$ membership tests.  Repeating the block $O(\log(2r))$ times
reduces its failure probability to $O(1/r)$.  Since the membership predicate
at level $i$ costs $O(i+1)$ edge queries, one fixed-cap run uses
\[
 O\!\left(\sqrt M\log(2r)\sum_{i=0}^{r-1}(i+1)\right)
 =O(K^2 2^K\log K)
\]
edge queries.  The survival analysis above gives a constant success
probability for each run, including capped-search failures and final
verification.  Repeating independent runs $O(\log(1/\eta))$ times proves the
stated worst-case bound.
\end{proof}

\subsection{The scale-aware recursion}

\begin{proof}[Proof of \cref{lem:concentration}]
Condition on the complete classical history $\mathcal H_i$ before the samples
at level $i$.  The graph, $S_i$, and $v_i$ are then fixed.  Let
$\mathcal B_i$ be the event that the verified batch succeeds.  Conditional on
$\mathcal B_i$, \cref{lem:capped-search} gives independent uniform samples
from the fixed set.  Hoeffding's inequality \cite{Hoeffding1963} and
\cref{eq:sample-count} imply
\[
 \Pr\!\left(
 |\widehat p_i-p_i|>\varepsilon_i\mid\mathcal H_i,\mathcal B_i
 \right)
 \le2e^{-2m_i\varepsilon_i^2}
 \le\frac{\eta}{2r}.
\]
Multiplying by $\Pr(\mathcal B_i\mid\mathcal H_i)\le1$ gives the same upper
bound for a completed but inaccurate batch conditional on $\mathcal H_i$.
Apply this bound to the first inaccurate completed level after an accurate
prefix.  The tower property removes the conditioning, and a union bound over
the $r$ possible levels completes the proof.  The argument does not require
independence across levels.
\end{proof}

There are $r$ pivot searches, $r$ sampling batches, and one final search.
Give each operation failure probability at most $\eta/(4r+2)$ using the
capped amplification in \cref{lem:capped-search}.  Applying these conditional
bounds after good prefixes, as above, shows that their union has probability
at most $\eta/2$.  Together with \cref{lem:concentration}, all searches and
estimates used below are good with probability at least $1-\eta$.

\begin{proof}[Proof of \cref{lem:survival}]
An accurate empirical majority retains at least the fraction
$a_i=1/2-\varepsilon_i$ of the non-pivot vertices.  Hence
\begin{equation}\label{eq:variable-recurrence}
 s_{i+1}\ge a_i(s_i-1).
\end{equation}
Since $\sum_i\varepsilon_i=1/16$ and
$\prod_j(1-x_j)\ge1-\sum_jx_j$ for $x_j\in[0,1]$,
\[
 A_i
 =2^{-i}\prod_{j<i}(1-2\varepsilon_j)
 \ge2^{-i}\left(1-2\sum_{j<i}\varepsilon_j\right)
 \ge\frac78\,2^{-i}.
\]

We prove $s_i>MA_i-1$ by induction.  The claim holds at $i=0$.  If it holds
at level $i$, then \cref{eq:variable-recurrence} and $a_i<1/2$ give
\[
 s_{i+1}>a_i(MA_i-2)=MA_{i+1}-2a_i>MA_{i+1}-1.
\]
Because $M2^{-r}=2$, we obtain
$s_r>(7/8)2-1=3/4$, and therefore the integer $s_r$ is positive.  For $i<r$,
we have $M2^{-i}\ge4$ and
\[
 s_i-1>MA_i-2
 \ge\frac78M2^{-i}-2
 \ge\frac38M2^{-i}.
\]
Finally, $s_r\ge1$ and $1/M=2^{-(r+1)}$, which proves the last assertion.
\end{proof}

\begin{proof}[Proof of \cref{thm:main}]
Set $M=4^{K-1}$ and run the algorithm of \cref{sec:algorithm} on any fixed
set of $M$ input vertices.  The failure allocation above, together with
\cref{lem:concentration}, implies that all required searches succeed and all
completed estimates are accurate with probability at least $1-\eta$.  On
this event, \cref{lem:survival} guarantees every density promise used by the
sampler and ensures that $S_r$ is nonempty.  The final output is homogeneous
by \cref{lem:output}.  Because the algorithm verifies the returned set
directly, every non-$\bot$ output is valid even outside the good event.

It remains to bound the number of queries.  By \cref{lem:survival}, a capped
search block at level $i$ uses $O(2^{i/2})$ membership tests.  Computing and
uncomputing membership and querying the sampled edge costs $O(i+1)$ edge
queries.  Define
\[
 b_i=(i+1)2^{i/2}.
\]
The batch cap in \cref{lem:capped-search} uses
$O(m_i+\log(r/\eta))=O(m_i)$ blocks.  By \cref{eq:sample-count}, the total
cost of the sampling batches is therefore
\begin{equation}\label{eq:query-sum}
 O\!\left(
 \log\frac r\eta
 \sum_{i=0}^{r-1}b_i\varepsilon_i^{-2}
 \right).
\end{equation}

At most six indices share any value of $d_i$, and hence
\[
 Z=\sum_i2^{-d_i}<6\sum_{d\ge0}2^{-d}=12.
\]
Using \cref{eq:error-schedule},
\[
 \sum_{i=0}^{r-1}b_i\varepsilon_i^{-2}
 =O\!\left(
 \sum_{i=0}^{r-1}(i+1)2^{i/2}4^{d_i}
 \right).
\]
Write $h=r-1-i$.  Since
$4^{\lceil h/6\rceil}\le4\cdot2^{h/3}$, the last display is at most
\[
 O\!\left(
 2^{r/2}\sum_{h=0}^{r-1}(r-h)2^{-h/6}
 \right)
 =O(r2^{r/2}).
\]
The amplified pivot searches and the final search cost
\[
 O\!\left(\log\frac r\eta\sum_{i=0}^{r}b_i\right)
 =O\!\left(r2^{r/2}\log\frac r\eta\right),
\]
and the $O(K^2)$ output verification is smaller.  Substituting $r=2K-3$
into \cref{eq:query-sum} gives
\[
 O\!\left(2^K K\log\frac K\eta\right),
\]
as claimed.  Each search begins in the uniform state on the fixed
power-of-two universe $[M]$, and the recorded pivot labels and colours define
the reversible membership predicate for $S_i$, so the algorithm requires no
materialized representation of any candidate set.
\end{proof}

%% file: appendices/C_lower_and_random.tex
\section{Proofs for \texorpdfstring{\cref{sec:lower}}{Section 4}}
\label{app:classical-lower}

\subsection{A weight bound for adaptive queries to a random graph}
\label{app:weight-bound}

We record the random-Painter estimate of Conlon, Fox, Grinshpun, and He
\cite[proof of Theorem~1]{ConlonFoxGrinshpunHe2019} in the form used in
this paper.  We follow their weight argument, taking the matching number
of the queried graph as the potential; removing the critical edge from a
maximum matching then leaves a matching of the smaller set, which keeps the
recursion closed.

Throughout this subsection $V\ge2$, and $G\sim G(V,1/2)$ assigns
independent uniform colours in $\{0,1\}$ to the pairs of $[V]$.  A
deterministic algorithm queries $T$ distinct pairs adaptively.  Let
$e_1,\ldots,e_T$ be the queried pairs, let $Q_t$ be the graph formed by
$e_1,\ldots,e_t$, and let $\mathcal F_t$ be the $\sigma$-algebra generated
by the first $t$ answers.  Because the algorithm is deterministic, $e_{t+1}$
is an $\mathcal F_t$-measurable pair distinct from $e_1,\ldots,e_t$, and its
colour is therefore a uniform bit independent of $\mathcal F_t$.

For $U\subseteq[V]$ and $0\le t\le T$, let $e_t(U)$ be the number of pairs
inside $U$ among $e_1,\ldots,e_t$, and call $U$ \emph{consistent at time
$t$} if all of these pairs have colour~$1$.  Define
\[
 w(U,t)=
 \begin{cases}
 2^{-\binom{|U|}2+e_t(U)}, & U\text{ is consistent at time }t,\\
 0,&\text{otherwise},
 \end{cases}
\]
the conditional probability, given $\mathcal F_t$, that $U$ becomes a
colour-$1$ clique when its remaining pairs are revealed.  Let $\nu(U,t)$ be
the matching number of $Q_t[U]$; it is nondecreasing in $t$ and increases by
at most one when a pair is added.  For integers $k\ge2c\ge0$ set
\[
 w_{k,c}(t)=\sum_{|U|=k,\ \nu(U,t)\ge c}w(U,t).
\]

\begin{lemma}[Weight bound]\label{lem:weight-bound}
For all integers $k\ge2c\ge0$,
\[
 \mathbb E\,w_{k,c}(T)\le 2^{-\binom k2+c(c-1)}\,V^{k-2c}\,T^{c}.
\]
\end{lemma}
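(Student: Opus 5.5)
We prove the bound by induction on $c$, with $k$ arbitrary. The base case $c=0$ is an identity, because each $w(U,t)$ is a martingale in $t$. When $e_{t+1}$ lies outside $U$, $w(U,t)$ does not change. When $e_{t+1}$ lies inside $U$ and $U$ is consistent, $w$ doubles with probability $1/2$ and drops to $0$ with probability $1/2$. Hence $\mathbb E\,w(U,T)=w(U,0)=2^{-\binom k2}$. Summing over the $\binom Vk\le V^k$ sets $U$ gives $\mathbb E\,w_{k,0}(T)\le2^{-\binom k2}V^k$, which is the claim at $c=0$.

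For the inductive step with $c\ge1$, we decompose according to the time at which $\nu(U,\cdot)$ first reaches $c$. Since $\nu$ is nondecreasing and increases by at most one per added pair, there is a unique step $t+1$ at which $\nu(U,t)=c-1$ and $\nu(U,t+1)=c$, and then $e_{t+1}\subseteq U$. After that time $w(U,\cdot)$ is still a martingale, and we stop it at $T$. Optional stopping therefore gives
\[
\mathbb E\,w_{k,c}(T)\le\sum_{t<T}\mathbb E\sum_{\substack{|U|=k,\ e_{t+1}\subseteq U\\ \nu(U,t)=c-1,\ \nu(U,t+1)=c}}w(U,t+1).
\]
This is the critical-edge step. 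Write $e_{t+1}=\{x,y\}$. Taking expectations over the colour of $e_{t+1}$ given $\mathcal F_t$ yields $\mathbb E[w(U,t+1)\mid\mathcal F_t]=w(U,t)$. We now relate $U$ to $U'=U\setminus\{x,y\}$, a $(k-2)$-set. At time $t$, $e_{t+1}$ has not been queried, so no queried pair joins $x$ to $y$. However, $x$ and $y$ may have queried edges into $U'$. Following the paper's hint, we take a maximum matching of $Q_{t+1}[U]$. It must use the critical edge $\{x,y\}$, since otherwise $\nu(U,t)$ would already equal $c$. Removing that edge leaves a matching of size $c-1$ inside $U'$, so $\nu(U',t)\ge c-1$.

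We then need the weight comparison $w(U,t)\le 2^{-(2k-3)}2^{e_t(x,U')+e_t(y,U')}w(U',t)$. Here $2k-3$ counts the pairs inside $U$ that are not inside $U'$, and we write $e_t(x,U')$ for the number of queried pairs between $x$ and $U'$.

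\textbf{Main obstacle.} The difficulty is controlling the factor $2^{e_t(x,U')+e_t(y,U')}$. Two facts about the edges between $\{x,y\}$ and $U'$ at time $t$ should help. First, if $x$ had a queried edge to a vertex of $U'$ not covered by the size-$(c-1)$ matching, then that edge together with $\{x,y\}$ would enlarge matchings, contradicting maximality. Second, if $x$ and $y$ were adjacent to the two endpoints of a single matching edge, we would get an augmenting path. So $x$ and $y$ see only the $2(c-1)$ covered vertices, and across each matching edge they jointly contribute at most two queried pairs. This gives $e_t(x,U')+e_t(y,U')\le2(c-1)$. Consequently
\[
w(U,t)\le2^{-(2k-3)+2(c-1)}w(U',t).
\]
The exponent difference is exactly what is needed, because
\[
\binom k2-\binom{k-2}2=2k-3,\qquad c(c-1)-(c-1)(c-2)=2(c-1).
\]

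Finally, for fixed $t$ the pair $e_{t+1}$ is determined by $\mathcal F_t$, so each $U'$ arises from at most one $U$ at time $t$. Summing over $t<T$ then gives
\[
\mathbb E\,w_{k,c}(T)\le2^{-(2k-3)+2(c-1)}\,T\,\sup_t\mathbb E\,w_{k-2,c-1}(t).
\]
We bound $\mathbb E\,w_{k-2,c-1}(t)$ using the induction hypothesis applied at horizon $t\le T$, together with monotonicity in $T$. This supplies the factor $2^{-\binom{k-2}2+(c-1)(c-2)}V^{k-2c}T^{c-1}$. Multiplying by the prefactor and the extra factor $T$ yields $2^{-\binom k2+c(c-1)}V^{k-2c}T^c$, which completes the induction.

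The step I expect to be delicate is the matching argument bounding $e_t(x,U')+e_t(y,U')\le2(c-1)$. If it fails, the fallback is to apply a Gallai–Edmonds style count on which vertices of $U'$ the endpoints $x,y$ can touch.
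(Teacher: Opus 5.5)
Your proof is correct and follows the paper's own argument: you decompose at the stopping time when $\nu(U,\cdot)$ first reaches $c$, remove the critical edge from a maximum matching, bound the queried pairs from $\{x,y\}$ to $U'$ by $2(c-1)$ using exactly the paper's two matching facts, use injectivity of $U\mapsto U'$, and induct on $c$. The differences are cosmetic. You absorb the colour of the critical edge through the martingale step and compare $w(U,t)$ with $w(U',t)$ before that edge is added, whereas the paper compares $w(U,t)$ with $w(U',t-1)$ times a colour indicator; both give the same factor $2^{-(2k-2c-1)}$. One wording fix: in your first matching fact, the contradiction is that $M'\cup\{xa\}$ is a size-$c$ matching of $Q_t[U]$, contradicting $\nu(U,t)=c-1$; it does not come from combining the edge $xa$ with $\{x,y\}$, since the two share $x$.
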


\begin{proof}
\emph{Martingale property.}  Fix $U$.  If $e_{t+1}\not\subseteq U$, then
$w(U,t+1)=w(U,t)$.  If $e_{t+1}\subseteq U$, then with probability $1/2$
the new pair has colour~$1$ and $w(U,t+1)=2w(U,t)$, and otherwise
$w(U,t+1)=0$.  Hence $\mathbb E[w(U,t+1)\mid\mathcal F_t]=w(U,t)$, and in
particular $\mathbb E\,w(U,T)=w(U,0)=2^{-\binom{|U|}2}$.  Summing over all
$k$-sets gives the case $c=0$:
\[
 \mathbb E\,w_{k,0}(T)=\binom Vk2^{-\binom k2}\le V^k2^{-\binom k2}.
\]

\emph{Monotonicity.}  Since $\nu(U,\cdot)$ is nondecreasing,
$w_{k,c}(t+1)\ge\sum_{\nu(U,t)\ge c}w(U,t+1)$, and each event
$\{\nu(U,t)\ge c\}$ is $\mathcal F_t$-measurable.  Taking conditional
expectations gives $\mathbb E\,w_{k,c}(t)\le\mathbb E\,w_{k,c}(t+1)$.

\emph{Critical times.}  Let $c\ge1$.  For each $U$ let
$\tau(U)=\min\{t:\nu(U,t)\ge c\}$, with $\tau(U)=\infty$ if
$\nu(U,T)<c$; this is a stopping time with $\tau(U)\ge1$.
Since $\{\tau(U)=t\}\in\mathcal F_t$ and $w(U,\cdot)$ is a martingale,
\[
 \mathbb E\,w_{k,c}(T)
 =\sum_{|U|=k}\sum_{t=1}^T\mathbb E\bigl[w(U,T)\,\mathbf 1[\tau(U)=t]\bigr]
 =\sum_{t=1}^T\mathbb E\,w^*_{k,c}(t),
\]
where
\[
 w^*_{k,c}(t)=\sum_{|U|=k,\ \tau(U)=t}w(U,t).
\]

\emph{Charging.}  Fix $t$ and write $e_t=\{x,y\}$.  Suppose $|U|=k$ and
$\tau(U)=t$.  Then $\nu(U,t-1)=c-1$, $\nu(U,t)=c$, and $e_t\subseteq U$.
Moreover, every maximum matching of $Q_t[U]$ contains $e_t$, since a
maximum matching avoiding $e_t$ would be a matching of size $c$ in
$Q_{t-1}[U]$.  Fix such a matching $M$, let $M'=M\setminus\{e_t\}$, let $A$
be the set of $2(c-1)$ vertices covered by $M'$, and let
$U'=U\setminus\{x,y\}$.  Then $M'$ is a matching of $Q_{t-1}[U']$, so
$\nu(U',t-1)\ge c-1$.  We claim that at most $2(c-1)$ of the pairs between
$\{x,y\}$ and $U'$ belong to $Q_{t-1}$.  Indeed, if $xa\in Q_{t-1}$ with
$a\in U'\setminus A$, then $M'\cup\{xa\}$ is a matching of size $c$ in
$Q_{t-1}[U]$, a contradiction; the same holds for $y$.  For each
$ab\in M'$, the pairs $xa$ and $yb$ cannot both lie in $Q_{t-1}$, since
$(M'\setminus\{ab\})\cup\{xa,yb\}$ would again be a matching of size $c$ in
$Q_{t-1}[U]$; likewise $xb$ and $ya$ cannot both lie in $Q_{t-1}$.  Hence
at most two of $xa,xb,ya,yb$ lie in $Q_{t-1}$, and the claim follows by
summing over the $c-1$ edges of $M'$.

Consequently $e_t(U)\le e_{t-1}(U')+1+2(c-1)$, and $U$ is consistent at
time $t$ only if $U'$ is consistent at time $t-1$ and $e_t$ has colour~$1$.
Using $\binom k2-\binom{k-2}2=2k-3$, we obtain
\[
 w(U,t)\le 2^{-(2k-2c-2)}\,w(U',t-1)
 \quad\text{if }e_t\text{ has colour }1,
\]
and $w(U,t)=0$ otherwise.  The map $U\mapsto U'$ is injective because
$U=U'\cup e_t$.  Therefore
\[
 w^*_{k,c}(t)\le
 \mathbf 1[e_t\text{ has colour }1]\;2^{-(2k-2c-2)}\,w_{k-2,c-1}(t-1).
\]
The indicator is independent of $\mathcal F_{t-1}$ with mean $1/2$, while
$w_{k-2,c-1}(t-1)$ is $\mathcal F_{t-1}$-measurable.  Taking expectations
and using monotonicity,
\[
 \mathbb E\,w^*_{k,c}(t)\le 2^{-(2k-2c-1)}\,\mathbb E\,w_{k-2,c-1}(T),
\]
and summing over $t\le T$ gives
\[
 \mathbb E\,w_{k,c}(T)\le T\,2^{-(2k-2c-1)}\,\mathbb E\,w_{k-2,c-1}(T).
\]

\emph{Iteration.}  Applying the last inequality $c$ times, the exponents
$2(k-2i)-2(c-i)-1$ for $i=0,\ldots,c-1$ sum to $2kc-3c^2$, and the case
$c=0$ gives
\[
 \mathbb E\,w_{k,c}(T)
 \le T^c\,2^{-(2kc-3c^2)}\,V^{k-2c}\,2^{-\binom{k-2c}2}
 =2^{-\binom k2+c(c-1)}\,V^{k-2c}\,T^c,
\]
because $2kc-3c^2+\binom{k-2c}2=\binom k2-c(c-1)$.
\end{proof}

\begin{corollary}[Fully exposed monochromatic sets]\label{cor:exposed}
Let $K\ge2$, let $V\ge K$, let $T\ge1$, and let a deterministic algorithm
make at most $T$ adaptive edge queries to $G\sim G(V,1/2)$.  For every integer
$0\le c\le K/2$, the expected number of $K$-sets all of whose $\binom K2$
pairs have been queried and carry the same colour is at most
\[
 2\cdot2^{-\binom K2+c(c-1)}(2T)^{K-c}.
\]
\end{corollary}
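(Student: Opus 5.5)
The plan is to reduce the count to the weight $w_{K,c}(T)$ of \cref{lem:weight-bound}, handle the two colours by symmetry, and remove the dependence on $V$ by relabelling vertices.

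\emph{Step 1: fully exposed cliques contribute weight one.} Fix $0\le c\le K/2$. Suppose a $K$-set $U$ has all $\binom K2$ pairs queried and all of them carry colour~$1$. Then $e_T(U)=\binom K2$ and $U$ is consistent, so $w(U,T)=1$. Moreover $Q_T[U]$ is the complete graph on $K$ vertices, whose matching number is $\lfloor K/2\rfloor\ge c$. Hence $U$ appears in the sum defining $w_{K,c}(T)$ with weight $1$. All weights are nonnegative, so the number of fully exposed colour-$1$ $K$-sets is at most $w_{K,c}(T)$ pointwise.

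\emph{Step 2: colour~$0$ by symmetry.} Let the algorithm answer each query with the complemented bit. This is again a deterministic adaptive algorithm, and complementation preserves the law of $G(V,1/2)$. Its fully exposed colour-$1$ sets are exactly the original fully exposed colour-$0$ sets. Applying Step~1 to this algorithm gives the same expectation bound for colour~$0$, which produces the factor $2$.

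\emph{Step 3: removing $V$.} \cref{lem:weight-bound} alone gives $2\cdot2^{-\binom K2+c(c-1)}V^{K-2c}T^c$, which grows with $V$. This is the main obstacle, since the claimed bound must be independent of $V$. The fix is to note that at most $2T$ vertices are ever touched by queries. Simulate the algorithm on $[2T]$ by relabelling vertices in order of first appearance: the $j$-th new endpoint to appear becomes label $j$. This map is determined by the history, so the simulated algorithm is deterministic and queries distinct pairs of $[2T]$.

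The colours of distinct pairs are independent fair bits in both models, so the joint law of answers, and hence of the relabelled transcript, is the same as for $G(2T,1/2)$. Fully exposed monochromatic $K$-sets correspond bijectively under the relabelling. If $K>2T$ there are none and the bound is trivial. Otherwise apply Steps~1–2 with $V=2T\ge2$, using $(2T)^{K-2c}T^c\le(2T)^{K-c}$, to obtain
\[
 2\cdot2^{-\binom K2+c(c-1)}(2T)^{K-c},
\]
as claimed.
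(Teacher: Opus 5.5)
Your proof follows the paper's route. You relabel vertices in order of first appearance so that at most $2T$ labels occur, bound the colour-$1$ count pointwise by $w_{K,c}$ via $\nu(U,T)=\lfloor K/2\rfloor\ge c$ and $w(U,T)=1$, apply \cref{lem:weight-bound}, and double by symmetry. Embedding into $[2T]$ instead of the paper's $[\min\{V,2T\}]$ is a harmless simplification.

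One step is missing. \cref{lem:weight-bound} is stated for an algorithm that queries exactly $T$ distinct pairs, but yours makes at most $T$ queries, possibly repeated, and may stop after a number of queries that depends on the answers. The paper handles this by padding with unqueried pairs after the algorithm halts. You can do the same: first discard repeated queries, then pad with fresh pairs inside $[2T]$, which is possible because $\binom{2T}{2}\ge T$. Extra queries cannot decrease the number of fully exposed monochromatic $K$-sets, so the bound still applies.
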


\begin{proof}
Simulate the algorithm while renaming vertices in order of first
appearance in a query, using a table from original to canonical labels.
This does not change the joint distribution of the queried graph and its
colours, because the colours of distinct pairs are independent and uniform;
the renamed algorithm queries pairs inside $[V']$ with $V'=\min\{V,2T\}$,
and the number of fully exposed monochromatic $K$-sets is unchanged.  After
the renamed algorithm halts, append distinct unqueried pairs inside $[V']$
in a fixed order until exactly $T'=\min\{T,\binom{V'}2\}$ pairs have been
queried; this can only increase that number.  A fully exposed colour-$1$
$K$-set $U$ satisfies $\nu(U,T')=\lfloor K/2\rfloor\ge c$ and $w(U,T')=1$,
so the number of such sets is at most $w_{K,c}(T')$.
\Cref{lem:weight-bound} with $V'\le2T$ and $T'\le T$ bounds its expectation
by
$2^{-\binom K2+c(c-1)}(2T)^{K-2c}T^c\le2^{-\binom K2+c(c-1)}(2T)^{K-c}$.
The same bound holds for colour~$0$ by symmetry.
\end{proof}

\subsection{Proof of the randomized lower bound}
\label{app:classical-lower-proof}

\begin{proof}[Proof of \cref{prop:classical-lower}]
Let $\beta=2-\sqrt2$ and $\alpha=1-1/\sqrt2$, so that $\beta=2\alpha$ and
$\alpha^2-2\alpha+1/2=0$.  Set
\[
 T=\bigl\lfloor2^{\beta K-2}\bigr\rfloor,
 \qquad
 c=\lfloor\alpha K\rfloor=\alpha K+\epsilon,
 \qquad -1<\epsilon\le0.
\]
Since $\alpha<1/2$, we have $0\le c\le K/2$, and $K\ge8$ gives $T\ge1$.

Consider first a deterministic algorithm making at most $T$ queries to
$G\sim G(N,1/2)$.  By \cref{cor:exposed} and Markov's inequality, the
probability that the queried pairs contain a fully exposed monochromatic
$K$-set is at most
\begin{equation}\label{eq:random-painter-bound}
 2\cdot
 2^{-\binom K2+c(c-1)}(2T)^{K-c}.
\end{equation}
Since $2T\le2^{\beta K-1}$, the base-$2$ logarithm of
\cref{eq:random-painter-bound} is at most
\[
 1-\binom K2+c(c-1)+(K-c)(\beta K-1)
 =1-\frac K2+\epsilon^2,
\]
where the identity follows by substituting $c=\alpha K+\epsilon$ and
$\beta=2\alpha$ and using $\alpha^2-2\alpha+1/2=0$.  For $K\ge8$ this
quantity is at most $-2$, so the event has probability at most $1/4$.

Now condition on the query transcript and suppose that the algorithm
outputs a $K$-set whose internal pairs have not all been exposed in one
colour.  If the exposed internal pairs contain both colours, the output is
not homogeneous.  If they contain exactly one colour, at least one
unexposed pair must match that colour, which has conditional probability at
most $1/2$, because unexposed pairs are independent fair coins given the
transcript.  If no internal pair has been exposed, the success probability
is $2^{1-\binom K2}\le1/2$ for $K\ge3$.  Thus the conditional success
probability is at most $1/2$ outside the fully exposed event, and the
success probability under $G(N,1/2)$ is at most
\[
 \frac14+\frac34\cdot\frac12=\frac58.
\]

A randomized algorithm making at most $T$ queries is a distribution over
such deterministic algorithms, so its success probability on
$G\sim G(N,1/2)$ is also at most $5/8$.  Finally, an algorithm whose
worst-case success probability on graphs with $M$ vertices is at least
$2/3>5/8$ must make more than $T$ queries, and $M=4^{K-1}$ gives
\[
 2^{\beta K}
 =\Theta\!\left(M^{\beta/2}\right)
 =\Theta\!\left(M^{1-1/\sqrt2}\right),
\]
which proves the proposition.
\end{proof}

\subsection{Proof of the random-graph speedup}
\label{app:random-graph}

\begin{proof}[Proof of \cref{prop:random-graph}]
Let $M=4^{K-1}$.  For $W\subseteq[M]$ let
\[
 C(W)=\{u\in[M]\setminus W:\ G(w,u)=1\text{ for every }w\in W\}
\]
be the common neighbourhood of $W$.  Call $G$ \emph{expanding} if
$|C(W)|\ge M2^{-j-1}$ for every $j\in\{1,\ldots,K-1\}$ and every $j$-set
$W$; for $W=\emptyset$ the bound $|C(\emptyset)|=M$ holds trivially.

\emph{Random graphs are expanding.}  Fix $1\le j\le K-1$ and a $j$-set
$W$.  Then $|C(W)|\sim\operatorname{Bin}(M-j,2^{-j})$ with mean
$\mu=(M-j)2^{-j}$.  Since $M=4^{K-1}\ge4(K-1)\ge4j$, we have
$\mu\ge\frac34M2^{-j}$, hence $M2^{-j-1}\le\frac23\mu$ and
$\mu\ge\frac34M2^{-(K-1)}=\frac32\cdot2^{K-2}$.  The Chernoff bound
$\Pr[X\le(1-\gamma)\mu]\le e^{-\gamma^2\mu/2}$ \cite{Chernoff1952} with
$\gamma=1/3$ gives
\[
 \Pr\bigl[|C(W)|<M2^{-j-1}\bigr]\le e^{-\mu/18}\le e^{-2^{K-2}/12}.
\]
A union bound over the at most $\sum_{j=1}^{K-1}M^j\le2M^{K-1}$ sets $W$
shows that $G$ fails to be expanding with probability at most
$\pi_K=2M^{K-1}e^{-2^{K-2}/12}$.  For the numerical claim, $M=4^{K-1}$
gives
\[
 \ln\pi_K=\bigl(1+2(K-1)^2\bigr)\ln2-\frac{2^{K-2}}{12},
\]
which equals $339\ln2-4096/12<-106$ at $K=14$, and
$\ln\pi_{K+1}-\ln\pi_K=2(2K-1)\ln2-2^{K-2}/12<0$ for all $K\ge14$, since
the subtracted term doubles at each step while the first term grows by
$4\ln2$.  Hence $\pi_K<e^{-100}$ for all $K\ge14$.

\emph{The algorithm.}  Set $W_0=\emptyset$.  For $j=0,\ldots,K-1$, given
the clique $W_j=\{v_1,\ldots,v_j\}$, run the capped search of
\cref{lem:capped-search} on $[M]$ for the predicate ``$u\notin W_j$ and
$G(v_i,u)=1$ for all $i\le j$'', with density parameter
$\lambda_j=2^{-j-1}$, repeating the block $O(\log(K/\eta))$ times so that
its failure probability is at most $\eta/K$ whenever the density promise
holds.  The predicate costs $O(j+1)$ edge queries, including uncomputation.
Let $v_{j+1}$ be the verified output and $W_{j+1}=W_j\cup\{v_{j+1}\}$.  If
any search fails, output $\bot$; otherwise output $W_K$.

\emph{Correctness and cost.}  Every accepted vertex was verified to be
adjacent to all earlier vertices, so a non-$\bot$ output is a $K$-clique
for every input graph.  If $G$ is expanding, then the search at level $j$
has marked density at least $\lambda_j$, and a union bound over the $K$
searches shows that the algorithm outputs $W_K$ with probability at least
$1-\eta$.  Hence it succeeds with probability at least $1-\eta-\pi_K$ over
$G$ and its internal randomness.  Each capped block at level $j$ uses
$O(2^{j/2})$ predicate evaluations, so the total number of edge queries is
\[
 O\!\left(\log\frac K\eta\sum_{j=0}^{K-1}(j+1)2^{j/2}\right)
 =O\!\left(K2^{K/2}\log\frac K\eta\right)
\]
on every input, because every block has a predetermined cap.
\end{proof}

\begin{proof}[Proof of \cref{cor:separation}]
The subgraph induced on the first $M=4^{K-1}\le N$ vertices is distributed
as $G(M,1/2)$, so the first part follows from \cref{prop:random-graph},
using $2^{K/2}\le2^{n/4+1/2}$ and $K=O(\log N)$.  The second part is the
distributional statement of \cref{prop:classical-lower}, which applies
because $K\ge8$.  For the final sentence, note that $n\ge26$ gives
$K\ge14$, so $\pi_K<e^{-100}$; with $\eta=1/4$ the quantum algorithm
succeeds with probability greater than $2/3$, while $5/8<2/3$.
\end{proof}

\subsection{Proof of the quantum lower bound}
\label{app:quantum-lower}

\begin{proof}[Proof of \cref{prop:quantum-lower}]
Let $m=\lfloor n/4\rfloor\ge2$ and $H=2^m$.  By Erd\H{o}s's bound
$R(k,k)>2^{k/2}$ for $k\ge3$ \cite{Erdos1947}, there is a graph $A_0$ on
$[H]$ with no clique or independent set of order $2m$; fix one and hardwire
it, so that evaluating $A_0$ costs no queries.  For $h\colon[N]\to[H]$ let
$A_h$ be the graph \cref{eq:product-graph}.  It is a simple graph, and one
coherent query to $A_h$ costs $O(1)$ queries to $h$.

Every homogeneous set $S$ of $A_h$ with $|S|\ge2m$ contains two vertices
with the same $h$-value.  Indeed, if $S$ is independent, then its $h$-values
are distinct, because equal values force an edge, and $h(S)$ is an
independent set of $A_0$; hence $|S|=|h(S)|\le2m-1$, which is impossible.
So $S$ is a clique, $h(S)$ is a clique of $A_0$, and $|h(S)|\le2m-1<|S|$.

Since $2m\le\lfloor n/2\rfloor$, a set of the order in the statement
qualifies.  Given a $q$-query algorithm $\mathcal A$ as in the statement,
run it on $A_h$, evaluate $h$ on the $\lfloor n/2\rfloor$ returned
vertices, and output two of them with equal $h$-value.  This finds a
collision of $h$ with probability at least $2/3$ using $O(q+n)$ queries to
$h$, for every $h$.  Finding a collision in a uniformly random function
$[N]\to[H]$ with $N\ge H$ requires $\Omega(H^{1/3})$ quantum queries
\cite{Zhandry2015,LiuZhandry2019}, so $q+O(n)=\Omega(2^{m/3})$.  Finally
$m\ge(n-3)/4$ gives $2^{m/3}\ge2^{-1/4}N^{1/12}$, and $n=o(N^{1/12})$, so
$q=\Omega(N^{1/12})$.
\end{proof}

%% file: appendices/D_extensions_and_reproducibility.tex
\section{Multicolour extension, numerical illustrations, and AI disclosure}
\label{app:extensions}

\subsection{A multicolour extension of size-biased survival}
\label{app:multicolour}

Suppose every edge has one of $q\ge2$ colours, and define
\[
 r=q(K-2)+1,
 \qquad
 B_0=0,
 \qquad
 B_d=1+q+\cdots+q^{d-1}=\frac{q^d-1}{q-1}\quad(d\ge1).
\]
Choose the power of two
\[
 L_q=2^{\lceil\log_2(2B_r)\rceil},
\]
so $2B_r\le L_q<4B_r$.

\begin{corollary}[Multicolour size-biased Ramsey search]
\label{cor:multicolour}
For $q,K\ge2$, every $q$-edge-coloured complete graph on at least $L_q$
vertices admits a bounded-error quantum algorithm that outputs a
monochromatic $K$-clique using
\[
 O\!\left(r^2\sqrt{L_q}\log(2r)\log\frac1\eta\right)
\]
colour-oracle queries in the worst case, where $r$, $B_r$, and $L_q$ are as
defined above.
\end{corollary}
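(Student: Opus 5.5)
We follow the two-colour size-biased recursion of \cref{prop:size-biased-algorithm}, now with $q$ colours.

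\emph{Algorithm.} Restrict to the first $L_q$ vertices and set $S_0=[L_q]$. Fix a pivot $v_0$. At step $i=0,\ldots,r-1$, draw an exactly uniform $x_i\in S_i\setminus\{v_i\}$ using the capped search of \cref{lem:capped-search} with the universal density bound $1/L_q$. Query the colour $c_i=G(v_i,x_i)\in[q]$, set $S_{i+1}=\{x\in S_i\setminus\{v_i\}:G(v_i,x)=c_i\}$, and set $v_{i+1}=x_i$. Testing membership in $S_i$ is a conjunction of $O(i+1)$ colour queries, as before, and colour comparisons are free reversible operations.

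\emph{Output correctness.} Put $w=v_r$. Nesting gives $G(v_i,v_j)=G(v_i,w)=c_i$ for every $i<j$. Since $r=q(K-2)+1$, pigeonhole yields a colour that appears at least $K-1$ times among $c_0,\ldots,c_{r-1}$. Those pivots together with $w$ form a monochromatic $K$-clique, which we verify directly with $O(K^2)$ queries.

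\emph{Survival lemma.} This is the key step. Let $P_d(s)$ be defined as in the two-colour case. We aim to prove
\[
 P_d(s)\ge\frac{(s-B_d)_+}{s}.
\]
For the induction, let the colour classes of the $s-1$ non-pivot vertices have sizes $a_1+\cdots+a_q=s-1$. Size-biased selection gives
\[
 P_d(s)\ge\frac{\sum_k a_kP_{d-1}(a_k)}{s-1}\ge\frac{\sum_k (a_k-B_{d-1})_+}{s-1}.
\]
Superadditivity of $(\cdot)_+$ then bounds the numerator below by $(s-1-qB_{d-1})_+=(s-B_d)_+$, using $B_d=1+qB_{d-1}$. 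Replacing the denominator $s-1$ by $s$ only weakens the bound.

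At $(d,s)=(r,L_q)$ with $L_q\ge 2B_r$, we get $P_r(L_q)\ge1-B_r/L_q\ge1/2$. The base cases $P_d(0)=P_d(1)=0$ for $d\ge1$ are consistent with this, since $B_d\ge1$.

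This step is the main obstacle because the two-colour margin $1/2+1/M$ is lost. We only get $\ge1/2$, not strictly more. I would handle this by charging each of the $r$ searches failure probability $1/(100r)$ via $O(\log(2r))$ repetitions. The coupling through exact conditional uniformity then gives per-run success at least $0.49$, still a constant. Alternatively, one could note that the bound is strictly larger whenever $L_q>2B_r$.

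\emph{Cost.} Each capped block costs $O(\sqrt{L_q})$ membership tests and is repeated $O(\log(2r))$ times. Membership at level $i$ costs $O(i+1)$ queries. One run therefore uses
\[
 O\Bigl(\sqrt{L_q}\log(2r)\sum_{i<r}(i+1)\Bigr)=O\bigl(r^2\sqrt{L_q}\log(2r)\bigr)
\]
queries. Running $O(\log(1/\eta))$ independent runs, each verified, amplifies the constant success probability to $1-\eta$ and gives the stated bound. Every block has a predetermined cap, so the bound holds in the worst case.
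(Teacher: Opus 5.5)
Your proposal is correct and follows the paper's proof essentially step for step: the same $q$-colour size-biased recursion on the first $L_q$ vertices, the same superadditivity induction via $B_d=1+qB_{d-1}$, the pigeonhole count $\lceil r/q\rceil=K-1$, and the same $O\bigl(\sqrt{L_q}\log(2r)\sum_{i<r}(i+1)\bigr)$ per-run cost, amplified by $O(\log(1/\eta))$ verified repetitions. The only difference is minor. The paper keeps the sharper form $(s-B_d)_+/(s-1)$ to get survival strictly above $1/2$, while you settle for $\ge1/2$ and then $0.49$ after charging search failures. Because outputs are verified, any positive constant suffices, so the lost strict margin is not a real obstacle.
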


\begin{proof}
Run the size-biased recursion of \cref{sec:algorithm} for $r$ rounds on the
first $L_q$ vertices, now following the colour of a uniformly sampled edge
among $q$ possibilities.  Let $P_d^{(q)}(s)$ be the minimum probability that
the ideal recursion completes $d$ further rounds from a candidate set of size
$s$ and a fixed pivot.  We claim that
\[
 P_d^{(q)}(s)\ge \frac{(s-B_d)_+}{s}
\]
for all $d\ge0$ and $s\ge1$.  The case $d=0$ is immediate.  For $d\ge1$ and
$s\ge2$, let the $q$ colour classes among the $s-1$ non-pivot vertices have
sizes $a_1,\ldots,a_q$.  The induction hypothesis gives
\[
 P_d^{(q)}(s)
 \ge \frac{\sum_{j=1}^q a_jP_{d-1}^{(q)}(a_j)}{s-1}
 \ge \frac{\sum_{j=1}^q(a_j-B_{d-1})_+}{s-1}.
\]
Since $\sum_j a_j=s-1$ and $B_d=1+qB_{d-1}$, the numerator is at least
$(s-B_d)_+$.  This proves the claimed bound after weakening the denominator
from $s-1$ to $s$.  The preceding display also gives the stronger bound
$(s-B_d)_+/(s-1)$ whenever $d\ge1$ and $s\ge2$.  Consequently, starting from
$L_q\ge2B_r$, the ideal recursion survives all $r$ rounds with probability
strictly greater than $1/2$.

A membership predicate at level $i$ contains $i$ colour constraints and costs
$O(i+1)$ oracle queries.  Using the universal density bound $1/L_q$, one
capped sample costs $O((i+1)\sqrt{L_q})$ colour queries.  Repeating the capped
block $O(\log(2r))$ times makes each sampling failure probability
$O(1/r)$, so one fixed-cap run costs
\[
 O\!\left(
 \sqrt{L_q}\log(2r)\sum_{i=0}^{r-1}(i+1)
 \right)
 =O\!\left(r^2\sqrt{L_q}\log(2r)\right).
\]
The union bound preserves a positive constant success probability.  Since
$r=q(K-2)+1$, one of the $q$ recorded colours occurs at least
$\lceil r/q\rceil=K-1$ times.  The corresponding pivots and the final vertex
form a monochromatic $K$-clique by the same nesting argument as in
\cref{lem:output}.  Independent repetition and direct verification amplify
the success probability to $1-\eta$ at an additional
$O(\log(1/\eta))$ factor.
\end{proof}

\subsection{Small-instance numerical illustrations}
\label{app:experiments}

\subparagraph*{Ideal-sampler recursion.}
We ran the scale-aware recursion at $K=3$, $N=16$, and $r=3$ for $1{,}000$
trials in each of five graph families.  The error schedule was
$(\varepsilon_0,\varepsilon_1,\varepsilon_2)=(1/64,1/64,1/32)$, with
sample counts
$m_i=\lceil\log(600)/(2\varepsilon_i^2)\rceil$, corresponding to a total
estimation-failure budget of $0.01$.  Pivots were sampled uniformly,
majority-estimation samples were independent with replacement, and ties
selected colour~$1$.  We replaced quantum search by ideal uniform sampling to
isolate the combinatorial recursion.

The deterministic inputs were the complete graph, the empty graph, and the
balanced complete bipartite graph $K_{8,8}$.  For the random inputs, one
$G(16,1/2)$ graph was held fixed across all $1{,}000$ trials, while the final
family drew a fresh $G(16,1/2)$ graph in each trial.  Every returned triple was
verified; exhaustion or a nonhomogeneous output counted as failure.

\begin{center}
\begin{tabular}{lrr}
\hline
Graph family on 16 vertices & failed/trials & minimum final $|S_3|$\\
\hline
complete & $0/1000$ & 13\\
empty & $0/1000$ & 13\\
balanced complete bipartite $K_{8,8}$ & $0/1000$ & 6\\
one fixed $G(16,1/2)$ & $0/1000$ & 2\\
fresh $G(16,1/2)$ per trial & $0/1000$ & 2\\
\hline
\end{tabular}
\end{center}

All $5{,}000$ trials returned homogeneous triples.  The final column records
the smallest observed value of $|S_3|$ in each family.

\subparagraph*{Isolated Grover sampler.}
We evolved a $16$-dimensional state vector initialized in the uniform
superposition, using marked-set sizes $1,2,3,4,8,15$.  The measurement
distribution was averaged over a uniformly chosen number of Grover iterations
from zero through four.  The probability of measuring a marked item ranged
from $0.4026746749$ to $0.5973253251$.  Conditional on a marked outcome, the
largest deviation from the uniform distribution on marked items was zero to
machine precision.  This calculation isolates the permutation symmetry used
in \cref{lem:capped-search}.

\subparagraph*{Finite split-tree survival.}
Minimizing the ideal size-biased survival probability over every admissible
colour-class split gives exact finite-depth comparisons with the analytic
bounds.  For nine binary levels starting from $1{,}024$ candidates, the
minimum was approximately $0.5086135566$, above the lower bound $513/1024$.
For four ternary levels starting from $128$ candidates, the minimum was
approximately $0.7155732457$, above $88/127$.

\subsection{AI Disclosure}
\label{app:ai-disclosure}

We used ChatGPT and Claude as auxiliary tools for literature search,
brainstorming, proof checking, drafting and editing, and building experiment
scripts (Sections~1, 3--6, and Appendices~A--D).  The authors
determined the research direction and presentations, evaluated and revised
all AI-assisted material, and take full responsibility for the correctness,
originality, and final content of the paper, including every proof and
reference.

%% file: main.bbl
\begin{thebibliography}{10}

\bibitem{BianEtAl2013}
Zhengbing Bian, Fabian Chudak, William~G. Macready, Geordie Rose, and Frank
  Gaitan.
\newblock Experimental determination of {R}amsey numbers.
\newblock {\em Physical Review Letters}, 111:130505, 2013.
\newblock URL: \url{https://arxiv.org/abs/1201.1842}, \href
  {http://arxiv.org/abs/1201.1842} {\path{arXiv:1201.1842}}, \href
  {https://doi.org/10.1103/PhysRevLett.111.130505}
  {\path{doi:10.1103/PhysRevLett.111.130505}}.

\bibitem{BBHT1998}
Michel Boyer, Gilles Brassard, Peter H{\o}yer, and Alain Tapp.
\newblock Tight bounds on quantum searching.
\newblock {\em Fortschritte der Physik}, 46(4--5):493--506, 1998.
\newblock URL: \url{https://arxiv.org/abs/quant-ph/9605034}, \href
  {http://arxiv.org/abs/quant-ph/9605034} {\path{arXiv:quant-ph/9605034}}.

\bibitem{Chernoff1952}
Herman Chernoff.
\newblock A measure of asymptotic efficiency for tests of a hypothesis based on
  the sum of observations.
\newblock {\em The Annals of Mathematical Statistics}, 23(4):493--507, 1952.
\newblock \href {https://doi.org/10.1214/aoms/1177729330}
  {\path{doi:10.1214/aoms/1177729330}}.

\bibitem{ChildsEisenberg2005}
Andrew~M. Childs and Jason~M. Eisenberg.
\newblock Quantum algorithms for subset finding.
\newblock {\em Quantum Information and Computation}, 5(7):593--604, 2005.
\newblock \href {http://arxiv.org/abs/quant-ph/0311038}
  {\path{arXiv:quant-ph/0311038}}, \href {https://doi.org/10.26421/QIC5.7-7}
  {\path{doi:10.26421/QIC5.7-7}}.

\bibitem{ConlonFoxGrinshpunHe2019}
David Conlon, Jacob Fox, Andrey Grinshpun, and Xiaoyu He.
\newblock Online {R}amsey numbers and the subgraph query problem.
\newblock In {\em Building Bridges II: Mathematics of L{\'a}szl{\'o}
  Lov{\'a}sz}, volume~28 of {\em Bolyai Society Mathematical Studies}, pages
  159--194. Springer, 2019.
\newblock URL: \url{https://arxiv.org/abs/1806.09726}, \href
  {http://arxiv.org/abs/1806.09726} {\path{arXiv:1806.09726}}, \href
  {https://doi.org/10.1007/978-3-662-59204-5_4}
  {\path{doi:10.1007/978-3-662-59204-5_4}}.

\bibitem{Doern2005}
Sebastian D{\"o}rn.
\newblock Quantum complexity bounds for independent set problems.
\newblock {\em arXiv preprint}, 2005.
\newblock URL: \url{https://arxiv.org/abs/quant-ph/0510084}, \href
  {http://arxiv.org/abs/quant-ph/0510084} {\path{arXiv:quant-ph/0510084}}.

\bibitem{Erdos1947}
Paul Erd\H{o}s.
\newblock Some remarks on the theory of graphs.
\newblock {\em Bulletin of the American Mathematical Society}, 53(4):292--294,
  1947.
\newblock \href {https://doi.org/10.1090/S0002-9904-1947-08785-1}
  {\path{doi:10.1090/S0002-9904-1947-08785-1}}.

\bibitem{ErdosSzekeres1935}
Paul Erd\H{o}s and George Szekeres.
\newblock A combinatorial problem in geometry.
\newblock {\em Compositio Mathematica}, 2:463--470, 1935.
\newblock URL: \url{https://eudml.org/doc/88611}.

\bibitem{GaitanClark2012}
Frank Gaitan and Lane Clark.
\newblock {R}amsey numbers and adiabatic quantum computing.
\newblock {\em Physical Review Letters}, 108:010501, 2012.
\newblock URL: \url{https://arxiv.org/abs/1103.1345}, \href
  {http://arxiv.org/abs/1103.1345} {\path{arXiv:1103.1345}}.

\bibitem{LeGallNishimuraTani2016}
Fran{\c c}ois~Le Gall, Harumichi Nishimura, and Seiichiro Tani.
\newblock Quantum algorithms for finding constant-sized sub-hypergraphs.
\newblock {\em Theoretical Computer Science}, 609:569--582, 2016.
\newblock \href {http://arxiv.org/abs/1310.4127} {\path{arXiv:1310.4127}},
  \href {https://doi.org/10.1016/j.tcs.2015.10.006}
  {\path{doi:10.1016/j.tcs.2015.10.006}}.

\bibitem{Hoeffding1963}
Wassily Hoeffding.
\newblock Probability inequalities for sums of bounded random variables.
\newblock {\em Journal of the American Statistical Association},
  58(301):13--30, 1963.
\newblock \href {https://doi.org/10.1080/01621459.1963.10500830}
  {\path{doi:10.1080/01621459.1963.10500830}}.

\bibitem{ImpagliazzoNaor1988}
Russell Impagliazzo and Moni Naor.
\newblock Decision trees and downward closures.
\newblock In {\em Proceedings of the Third Annual Structure in Complexity
  Theory Conference}, pages 29--38. IEEE, 1988.
\newblock \href {https://doi.org/10.1109/SCT.1988.5260}
  {\path{doi:10.1109/SCT.1988.5260}}.

\bibitem{JainLiRobereXun2024}
Siddhartha Jain, Jiawei Li, Robert Robere, and Zhiyang Xun.
\newblock On pigeonhole principles and {R}amsey in {TFNP}.
\newblock In {\em 2024 IEEE 65th Annual Symposium on Foundations of Computer
  Science (FOCS)}, pages 406--428. IEEE, 2024.
\newblock \href {http://arxiv.org/abs/2401.12604} {\path{arXiv:2401.12604}},
  \href {https://doi.org/10.1109/FOCS61266.2024.00033}
  {\path{doi:10.1109/FOCS61266.2024.00033}}.

\bibitem{KomargodskiNaorYogev2019}
Ilan Komargodski, Moni Naor, and Eylon Yogev.
\newblock White-box vs. black-box complexity of search problems: {R}amsey and
  graph property testing.
\newblock {\em Journal of the ACM}, 66(5):34:1--34:28, 2019.
\newblock \href {https://doi.org/10.1145/3341106} {\path{doi:10.1145/3341106}}.

\bibitem{Krajicek2001}
Jan Kraj{\'\i}\v{c}ek.
\newblock On the weak pigeonhole principle.
\newblock {\em Fundamenta Mathematicae}, 170(1--2):123--140, 2001.
\newblock \href {https://doi.org/10.4064/fm170-1-8}
  {\path{doi:10.4064/fm170-1-8}}.

\bibitem{Krajicek2005}
Jan Kraj{\'\i}\v{c}ek.
\newblock Structured pigeonhole principle, search problems and hard
  tautologies.
\newblock {\em The Journal of Symbolic Logic}, 70(2):619--630, 2005.
\newblock \href {https://doi.org/10.2178/jsl/1120224731}
  {\path{doi:10.2178/jsl/1120224731}}.

\bibitem{LeeMagniezSantha2012}
Troy Lee, Fr{\'e}d{\'e}ric Magniez, and Miklos Santha.
\newblock A learning graph based quantum query algorithm for finding
  constant-size subgraphs.
\newblock {\em Chicago Journal of Theoretical Computer Science}, 2012(10),
  2012.
\newblock URL: \url{https://arxiv.org/abs/1109.5135}, \href
  {http://arxiv.org/abs/1109.5135} {\path{arXiv:1109.5135}}.

\bibitem{LiuZhandry2019}
Qipeng Liu and Mark Zhandry.
\newblock On finding quantum multi-collisions.
\newblock In {\em Advances in Cryptology---EUROCRYPT 2019, Part III}, volume
  11478 of {\em Lecture Notes in Computer Science}, pages 189--218. Springer,
  2019.
\newblock \href {http://arxiv.org/abs/1811.05385} {\path{arXiv:1811.05385}},
  \href {https://doi.org/10.1007/978-3-030-17659-4_7}
  {\path{doi:10.1007/978-3-030-17659-4_7}}.

\bibitem{Montanaro2018}
Ashley Montanaro.
\newblock Quantum walk speedup of backtracking algorithms.
\newblock {\em Theory of Computing}, 14(15):1--24, 2018.
\newblock URL: \url{https://arxiv.org/abs/1509.02374}, \href
  {http://arxiv.org/abs/1509.02374} {\path{arXiv:1509.02374}}, \href
  {https://doi.org/10.4086/toc.2018.v014a015}
  {\path{doi:10.4086/toc.2018.v014a015}}.

\bibitem{PasarkarPapadimitriouYannakakis2023}
Amol Pasarkar, Christos Papadimitriou, and Mihalis Yannakakis.
\newblock Extremal combinatorics, iterated pigeonhole arguments and
  generalizations of {PPP}.
\newblock In {\em 14th Innovations in Theoretical Computer Science Conference
  (ITCS 2023)}, volume 251 of {\em Leibniz International Proceedings in
  Informatics (LIPIcs)}, pages 88:1--88:20. Schloss Dagstuhl--Leibniz-Zentrum
  f{\"u}r Informatik, 2023.
\newblock \href {https://doi.org/10.4230/LIPIcs.ITCS.2023.88}
  {\path{doi:10.4230/LIPIcs.ITCS.2023.88}}.

\bibitem{PionMniszewski2025}
Joel~E. Pion and Susan~M. Mniszewski.
\newblock Toward computing bounds for {R}amsey numbers using quantum annealing.
\newblock {\em Quantum Information Processing}, 24(7):221, 2025.
\newblock URL: \url{https://arxiv.org/abs/2311.04405}, \href
  {http://arxiv.org/abs/2311.04405} {\path{arXiv:2311.04405}}, \href
  {https://doi.org/10.1007/s11128-025-04839-x}
  {\path{doi:10.1007/s11128-025-04839-x}}.

\bibitem{QuLiWangBaoCao2013}
Ri~Qu, Zong shang Li, Juan Wang, Yan ru~Bao, and Xiao chun Cao.
\newblock Computing hypergraph {R}amsey numbers by using quantum circuit.
\newblock {\em Quantum Information Processing}, 12(7):2487--2496, 2013.
\newblock \href {http://arxiv.org/abs/1210.3419} {\path{arXiv:1210.3419}},
  \href {https://doi.org/10.1007/s11128-013-0541-9}
  {\path{doi:10.1007/s11128-013-0541-9}}.

\bibitem{RanjbarMacreadyClarkGaitan2016}
Mani Ranjbar, William~G. Macready, Lane Clark, and Frank Gaitan.
\newblock Generalized {R}amsey numbers through adiabatic quantum optimization.
\newblock {\em Quantum Information Processing}, 15(9):3519--3542, 2016.
\newblock \href {http://arxiv.org/abs/1606.01078} {\path{arXiv:1606.01078}},
  \href {https://doi.org/10.1007/s11128-016-1363-3}
  {\path{doi:10.1007/s11128-016-1363-3}}.

\bibitem{Wang2016}
Hefeng Wang.
\newblock Determining {R}amsey numbers on a quantum computer.
\newblock {\em Physical Review A}, 93(3):032301, 2016.
\newblock \href {http://arxiv.org/abs/1510.01884} {\path{arXiv:1510.01884}},
  \href {https://doi.org/10.1103/PhysRevA.93.032301}
  {\path{doi:10.1103/PhysRevA.93.032301}}.

\bibitem{Zhandry2015}
Mark Zhandry.
\newblock A note on the quantum collision and set equality problems.
\newblock {\em Quantum Information and Computation}, 15(7--8):557--567, 2015.
\newblock \href {http://arxiv.org/abs/1312.1027} {\path{arXiv:1312.1027}},
  \href {https://doi.org/10.26421/QIC15.7-8-2}
  {\path{doi:10.26421/QIC15.7-8-2}}.

\bibitem{Zhu2012}
Yechao Zhu.
\newblock Quantum query complexity of constant-sized subgraph containment.
\newblock {\em International Journal of Quantum Information}, 10(3):1250019,
  2012.
\newblock \href {http://arxiv.org/abs/1109.4165} {\path{arXiv:1109.4165}},
  \href {https://doi.org/10.1142/S0219749912500190}
  {\path{doi:10.1142/S0219749912500190}}.

\end{thebibliography}
